\newcommand\numberthis{\addtocounter{equation}{1}\tag{\theequation}}
\newcommand{\ccc}{\textbf{c}}
\newcommand{\eee}{\textbf{e}}
\newcommand{\xxx}{\textbf{x}}
\newcommand{\uuu}{\textbf{u}}
\newcommand{\vvv}{\textbf{v}}
\newcommand{\F}{\mathbb{F}}
\newcommand{\supp}{\mbox{supp}}
\newcommand{\wt}{\text{wt}}
\newcommand{\mcC}{\mathcal{C}}
\newtheorem{theorem}{Theorem}
\newtheorem{lemma}{Lemma}
\newtheorem{assumption}{Assumption}
\theoremstyle{definition}
\newtheorem{definition}{Definition}
\newcommand{\la}{\langle}
\newcommand{\ra}{\rangle}
\renewcommand{\le}{\leqslant} 
\renewcommand{\leq}{\leqslant}
\renewcommand{\ge}{\geqslant}
\title{\Huge $\,$\\[-6.00ex]
Explicit Polar Codes with\\
Small Scaling Exponent
}
\author{
Hanwen Yao\\
   \small University of California San Diego\vspace*{-0.72ex}\\
   \small 9500 Gilman Drive, La Jolla, CA\,92093\vspace*{-0.54ex}\\
   \ttfamily\bfseries\small hwyao@ucsd.edu\\[4.5ex]
\and
Arman Fazeli\\
   \small University of California San Diego\vspace*{-0.72ex}\\
   \small 9500 Gilman Drive, La Jolla, CA\,92093\vspace*{-0.54ex}\\
   \ttfamily\bfseries\small afazelic@ucsd.edu\\[4.5ex]
\and
{Alexander Vardy}\\
   \small University of California San Diego\vspace*{-0.72ex}\\
   \small 9500 Gilman Drive, La Jolla, CA\,92093\vspace*{-0.54ex}\\
   \ttfamily\bfseries\small avardy@ucsd.edu\\[6.5ex]
%
}
\begin{document}
\maketitle
\begin{abstract}
Polar coding gives rise to the first explicit family of codes that
provably achieve capacity for a wide range of channels with efficient 
encoding and decoding. But how fast can polar coding approach capacity 
as a function of the code length?
In finite-length analysis, the scaling between code length 
and the gap to capacity is usually measured in terms of the 
\emph{scaling exponent} $\mu$. It is well known that the optimal
scaling exponent, achieved by random binary codes, is $\mu = 2$.
It is also well known that the scaling %
exponent of conventional polar codes on the binary erasure channel 
(BEC) is $\mu =3.627$, which falls far short of the optimal value. %
On the other hand, it was recently shown that polar codes derived %
from $\ell\times\ell$ \emph{binary polarization kernels} approach the 
optimal~scaling exponent $\mu = 2$ on the BEC as $\ell \to \infty$, 
with high probability over a random choice of the kernel.

Herein, we focus on \emph{explicit constructions} of $\ell\times\ell$ 
binary~kernels with small scaling exponent 
for $\ell \le 64$. 
In particular,~we~exhibit
a sequence of binary linear codes that 
approaches capacity on the BEC with quasi-linear complexity 
and scaling~exponent $\mu < 3$. To the best of our knowledge, such
a sequence of codes~was 
not previously known to exist. 
The principal challenges~in~establishing our results are twofold:
how to construct such kernels~and %
how to evaluate their scaling exponent. 

In a single polarization step, an $\ell\times\ell$ kernel $K_\ell$
transforms~an underlying BEC into $\ell$ bit-channels $W_1,W_2,\ldots,W_\ell$.
The erasure probabilities of $W_1,W_2,\ldots,W_\ell$, known as the
\emph{polarization behavior of $K_\ell$}, determine the resulting
scaling exponent $\mu(K_\ell)$. 
We first introduce a 
class of \emph{self-dual} binary kernels and prove~that~their %
polarization behavior satisfies a strong symmetry property.
This reduces the problem of constructing $K_\ell$ to that of 
producing~a~cer\-tain nested chain of only $\ell/2$ self-orthogonal codes.
We use nested %
cyclic codes, whose distance is as high as possible
subject to~the orthogonality constraint, to construct the kernels 
$K_{32}$ and $K_{64}$.
In order to evaluate the polarization behavior of $K_{32}$ and $K_{64}$,
two alternative trellis representations (which may be of independent
interest) are proposed. 
Using the resulting trellises, we show~that %
$\mu(K_{32})=3.122$ and explicitly compute over half of the 
polariza\-tion-behavior 
coefficients for $K_{64}$, at which point the complexity becomes
prohibitive. To complete the computation, 
we introduce a Monte-Carlo interpolation method, which produces
the estimate
$\mu(K_{64})\simeq 2.87$. We augment this estimate with a rigorous proof that
$\mu(K_{64})< 2.97$.
\end{abstract}

\section{Introduction} 
\label{sec:Introduction}

\noindent 
Polar coding, pioneered by Ar{\i}kan in~\cite{Arikan},
gives rise to the~first explicit family of codes that
provably achieve capacity for a wide range of channels with efficient 
encoding and decoding. This paper is concerned with 
\emph{how fast can polar coding~approach capacity}
as a function of the code length?
In finite-length analysis~\cite{%
rdkernel,
Hassani,
MHU16,
PU16,
PPV10}, 
the scaling between code length $n$
and the gap to capacity 
is usually measured in terms of the \emph{scaling exponent} $\mu$. 
It is well known that the scaling
exponent of conventional polar codes on the 
BEC is $3.627$,~which falls far short of the optimal value
$\mu = 2$.
However, 
it was recently shown~\cite{rdkernel}
that polar codes derived
from $\ell\times\ell$ {polarization~kernels} approach 
optimal scaling 
on the BEC as $\ell\,{\to}\,\infty$, 
with high probability over a random choice of the kernel.

Korada, \c{S}a\c{s}o\u{g}lu, and Urbanke~\cite{Korada} were the 
first to show that polarization theorems still hold if one replaces 
the conventional $2\times 2$ kernel $K_2$ of Ar\i kan~\cite{Arikan}
with an $\ell\times\ell$~binary matrix, provided that this
matrix is nonsingular and~not upper
triangular under any column permutation. 
Moreover,~\cite{Korada}~esta\-blishes a simple formula for
the error exponent of the resulting
polar codes in terms
of the {partial distances} of certain nested \emph{kernel codes}. 
However, an explicit formulation for the scaling
exponent is at present unknown, 
even for the simple case~of the BEC.
Just like Ar\i kan's 
$2\times 2$ kernel $K_2$, which transforms
the underlying channel
$W$ into two synthesized bit-channels
$\{W^+,W^-\}$, an $\ell\times\ell$ kernel
$K_\ell$ transforms $W$ into~$\ell$~synthesized 
{bit-channels} $W_1,W_2,\ldots,W_\ell$. If $W$ is a BEC 
with~erasure probability $z$, the bit-channels 
$W_1,W_2,\ldots,W_\ell$ are also BECs and their erasure probabilities
are given by integer polynomials $f_i(z)$ for $i = 1,2,\ldots,\ell$. The 
set $\{f_1(z),f_2(z),\ldots,f_\ell(z)\}$~is known~\cite{Allerton,rdkernel}
as the \emph{polarization behavior of $K_\ell$} 
and completely determines its scaling exponent $\mu(K_\ell)$.
 	
\begin{figure}[t]
 		\centering
 		\begin{tikzpicture}
 		
 		\begin{groupplot}[
 		group style={
 			group name=my fancy plots,
 			group size=1 by 2,
 			xticklabels at=edge bottom,
 			vertical sep=0pt,
 		},
 		xmin=0.5, xmax=6.5,
 		height = 7cm,
 		width = 12cm,
 		]
 		
 		\nextgroupplot[
 		axis y line=left,
 		axis x line=bottom,
 		x axis line style= {-,dashed,very thick},
 		xmin=0.5,xmax=6.5,
 		ymin=2.6,ymax=3.7,
 		axis y discontinuity=crunch,
 		ytick={2.6,2.8,3,3.2,3.4,3.6},yticklabels={2.0,2.8,3.0,3.2,3.4,3.6},
 		grid,
 		ylabel={scaling exponent $\mu$}]
 		
 		\addplot+[
 		nodes near coords,
 		point meta=explicit symbolic,
 		color=black,mark options={black},
 		visualization depends on={value \thisrow{anchor}\as\myanchor},,
 		every node near coord/.append style={font=\small,anchor=\myanchor}
 		] table [row sep=\\,meta=Label] 
 		{
 			x y Label anchor\\
 			1 3.627 {$3.627$} -90\\
 			2 3.627 {$3.627$} -90\\
 			3 3.577 {$3.577$} -120\\
 			4 3.346 {$3.346$} -140\\
 			5 3.122 {$3.122$} -140\\
 			6 2.87 {$2.87$} -140\\
 		};
 		\addplot+[dashed,very thick,color=black,mark=none]
 		coordinates {(0.5,3) (6.5,3)};
 		\node[] at (axis cs: 1.9,2.65) {optimal scaling exponent};
 		
 		\nextgroupplot[ymin=2.4,ymax=2.6,yscale=0.09,grid,
 		xtick={1,2,3,4,5,6},xticklabels={2,4,8,16,32,64},
 		ytick={2.4},yticklabels={},
 		axis x line=bottom,
 		axis y line=left,
 		y axis line style= {-},
 		xlabel={kernel size $\ell$},
 		]           
 		\end{groupplot}
 		\end{tikzpicture}
\caption{Scaling exponents of binary polarization kernels of size $\ell$. 
The values for $\ell=2,4,8$ are optimal; the values for $\ell=16,32,64$ 
are best known.}
 		\label{fig:scaling_curves}
\end{figure}
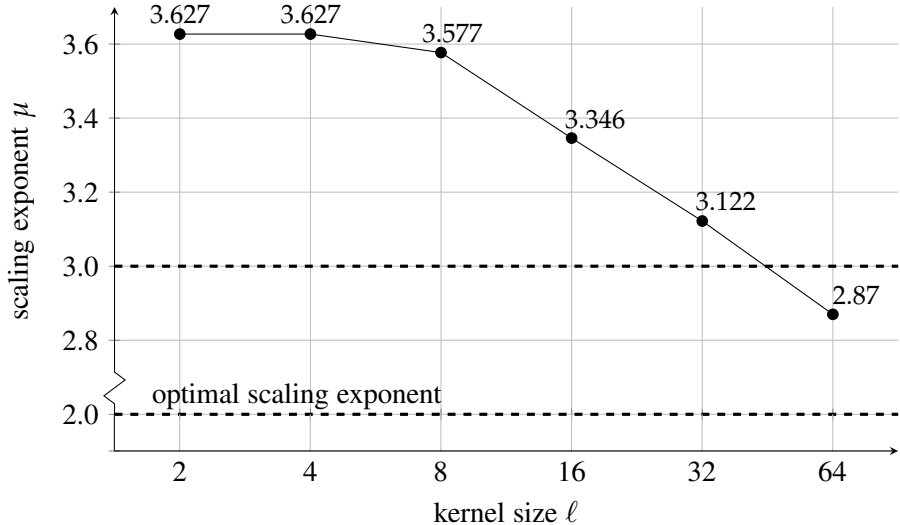
 	
While smaller scaling exponents 
translate into better finite-length performance, 
the complexity of decoding 
can grow~exponentially with the kernel size. There
have been attempts~to reduce the decoding complexity of large
kernels~\cite{BFSTV17,Trifonov},~however this
problem remains unsolved in general.
We note that, although our constructions are explicit, 
issues such as decoding the kernel are beyond the scope 
of this work. Rather, our goal is to study the following 
simple question: what is the smallest scaling exponent one can get with
an $\ell\times\ell$ binary kernel?
In particular, we construct a kernel $K_{64}$ with $\mu(K_{64}) \simeq 2.87$. 
This
gives rise to 
a sequence of binary linear codes that 
approaches capacity on the BEC with quasilinear complexity 
and scaling exponent strictly less than $3$.
To the best of our~knowledge such a sequence of codes was
not previously known to exist.

 	
\subsection{Related Prior Work}
\noindent 
Scaling exponents of error-correcting codes have been subject to
an extensive amount of research. It was known since the work of
Strassen~\cite{S62} that random codes attain the optimal scaling 
exponent $\mu =2$. It was furthermore shown 
in~\cite{PPV10} that random \emph{linear} codes
also achieve this optimal value. For polar codes, the first
attempts at bounding their scaling exponents were given
in~\cite{Hassani}, where the scaling exponent of polar codes for
arbitrary channels were shown to be bounded by $3.579\leq \mu \leq
6$. The upper bound was improved to $\mu \leq 4.714$
in~\cite{MHU16}. An upper bound on the scaling exponent of polar codes
for non-stationary channels was also presented in~\cite{M16} as $\mu
\leq 10.78$.
 	
Authors in~\cite{Hassani} also introduced a method to explicitly calculate the 
scaling exponent of polar codes over BEC based on its polarization behavior. They 
showed that for the Ar\i kan's kernel $K_2$, $\mu = 3.627$. Later on, an 
$8\times 8$ kernel $K_8$ was found with $\mu=3.577$ for BEC, which is optimal among 
all kernels with $\ell\le 8$~\cite{Allerton}. It was accompanied with a heuristic 
construction to design larger polarizing kernels with smaller scaling exponents, which 
gave rise to a $16\times 16$ kernel with $\mu=3.356$. In \cite{Trifonov}, a 
$32\times 32$ kernel $F_{32}$ and a $64\times 64$ kernel was constructed, which was 
shown (via simulations) to have a better \emph{frame error rate} than the Ar\i kan's 
kernel. They have also introduced an algorithm based on the 
\emph{binary decision diagram} (BDD) to efficiently calculate the polarization 
behavior of larger kernels. Attempts to achieve the optimal scaling exponent of $2$ 
were first seen in~\cite{PU16}, where it was shown that polar codes can achieve the 
near-optimal scaling exponent of $\mu =2+\epsilon$ by using explicit large kernels 
over large alphabets. The conjecture was just recently solved in~\cite{rdkernel}, 
where it was shown that one can achieve the near-optimal scaling exponent via 
\emph{almost any binary} $\ell\times\ell$ kernel given that $\ell$ is sufficiently 
large enough. Now it remains to find the explicit constructions of such 
\emph{optimal} kernels. Our results in this paper can be viewed as another step 
towards the derandomization of the proof in~\cite{rdkernel}.
 	

\subsection{Our Contributions}

\noindent 
In this paper, a more comprehensive kernel construction approach is proposed. We first 
introduce a special class of polarizing kernels called the \emph{self-dual} kernels. 
For those self-dual kernels, we prove a duality theorem showing that their 
polarization behaviors are symmetric, which enables us to construct the kernel by 
only designing its bottom half. In our construction, we use a greedy approach for 
the bottom half of the kernel, where we push the values of $f_i(z)$ as close to $0$ 
as possible in the order of $i=\ell, \ell-1, \cdots$, which intuitively gives us 
small scaling exponents. This construction gives the best previously found 
$16\times 16$ kernel $K_{16}$ provided in \cite{TT18} with scaling exponent $3.346$, 
a new $32\times 32$ kernel $K_{32}$ with $\mu(K_{32}) = 3.122$, and a new 
$64\times 64$ kernel $K_{64}$ with $\mu(K_{64}) \simeq 2.87$ as depicted in 
Figure~\ref{fig:scaling_curves}. We utilize the partial distances of nested 
Reed-Muller (RM) codes and cyclic codes to implement the proposed construction 
approach. 
 	
To calculate the scaling exponent of our constructed kernels, we first calculate their 
polarization behaviors, and then invoke the method introduced in \cite{Hassani}. 
For a specific bit-channel, its polarization behavior polynomial $f_i(z)$ can be 
described by the weight distribution of its \emph{uncorrectable erasure patterns}. 
To calculate this weight distribution, we introduce a new trellis-based algorithm. 
Our algorithm is significantly faster than the BDD based algorithm proposed in 
\cite{Trifonov}. It first builds a \emph{proper trellis} for those uncorrectable 
erasure patterns, and then applies the Viterbi algorithm to calculate its weight 
distribution. We also propose an alternative approach that builds a 
\emph{stitching trellis}, which we believe is of independent interest. However, for 
a very large kernel ($K_{64}$ in our case), the complexity of our trellis algorithm 
gets prohibitively high for intermediate bit-channels. As a fix, we introduce an 
alternative Monte Carlo interpolation-based method to numerically estimate those 
polynomials of the intermediate bit-channels, which we use to estimate the scaling 
exponent of $K_{64}$ as $\mu(K_{64}) \simeq 2.87$. We further give a rigorous
proof that $\mu(K_{64})< 2.97$.
 	

\section{Preliminary Discussions}
\noindent
Let $K_\ell$ be a $\ell\times\ell$ kernel $K_\ell=[g_1^T,g_2^T,\cdots,g_\ell^T]^T$ and 
$\xxx=\uuu K_\ell$ be a codeword that is transmitted over $\ell$ i.i.d. BEC channels 
$W=\text{BEC}(z)$. We define an \emph{erasure pattern} to be a vector 
$\eee\in\{0,1\}^\ell$, where 1 corresponds to the erased positions of $\xxx$ and 0 
corresponds to the unerased positions. The probability of occurance of a specific 
erasure pattern $\eee$ will be $z^{\wt(\eee)}(1-z)^{\ell-\wt(\eee)}$, where 
$\wt(\eee)$ is the Hamming weight of $\eee$.

\begin{definition}[Uncorrectable Erasure Patterns]
We say the erasure pattern $\eee$ is \emph{uncorrectable} for a bit-channel $W_i$ 
if and only if there exists two information vectors $\uuu',\uuu''$ such that 
$u'_j=u''_j$ for $j<i$, $u'_i\neq u''_i$ and $(\uuu'K_\ell)_j=(\uuu''K_\ell)_j$ for 
all unerased positions $j\in\{k\,:\,e_k=0\}$.
\label{def_e}
\end{definition}

\noindent
For the $i$-th bit-channel $W_i$, let $E_{i,w}$ be the number of its uncorrectable erasure patterns of weight $w$, then its erasure probability $f_i(z)$ can be represented as the polynomial

\begin{equation}
f_i(z)=\sum_{w=0}^{\ell}E_{i,w}z^w(1-z)^{(\ell-w)}.
\label{fiz}
\end{equation}

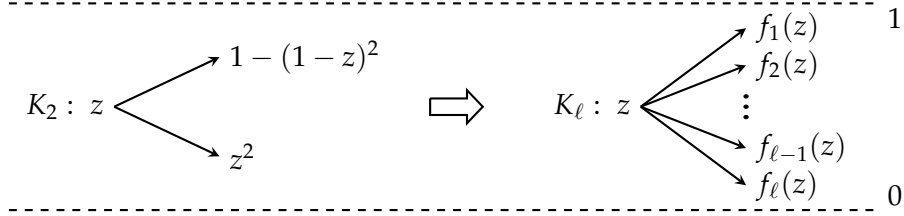
\begin{figure}[t]
 		\centering
 		\begin{tikzpicture}[xscale=0.7,yscale = 0.55]
 		\tikzset{>=stealth}
 		
 		\begin{scope}[shift={(-10,0)}]
 		\node [left] at (-1,0) {$K_2:\; z$};
 		\def\a{1.2};
 		\draw [thick,->] (-1,0) -- (1,\a);
 		\draw [thick,->] (-1,0) -- (1,-\a);
 		\node [right] at (1,\a) {$1-(1-z)^2$};
 		\node [right] at (1,-\a) {$z^2$};
 		\end{scope}
 		\begin{scope}[shift={(-9.5,-2)}]
 		\draw [thick] (4.5,2.2) -- (5.2,2.2) -- (5.2,2.4) -- (5.5,2) -- (5.2,1.6)
 		-- (5.2,1.8) -- (4.5,1.8) -- (4.5,2.2);
 		\end{scope}
 		\node [left] at (-1,0) {$K_\ell:\;z$};
 		
 		\def\a{1.9};
 		\draw [thick,->] (-1,0) -- (1,\a);
 		\draw [thick,->] (-1,0) -- (1,1);
 		\draw [thick,->] (-1,0) -- (1,-\a);
 		\draw [thick,->] (-1,0) -- (1,-1);
 		\node [right] at (1,\a) {$f_1(z)$};
 		\node [right] at (1,1) {$f_{2}(z)$};
 		\node [right] at (1,-\a) {$f_{\ell}(z)$};
 		\node [right] at (1,-1) {$f_{\ell-1}(z)$};
 		
 		\foreach \i in {-0.25,0,0.25}
 		{\draw [fill] (1,0+\i) circle [radius=0.04];}
 		
 		\def\b{2.5};
 		\def\c{3.5};
 		\draw [thick, dashed] (-13,\b) -- (\c,\b);
 		\draw [thick, dashed] (-13,-\b) -- (\c,-\b);
 		\node [right] at  (\c,\b-0.35) {$1$};
 		\node [right] at  (\c,-\b+0.35) {$0$};
 		\end{tikzpicture}
 		\caption{Transformation of the erasure probabilities in one polarization step.}
 		\label{fig_polarize}
\end{figure}
 	
\noindent 
Therefore if we can calculate the weight distribution of its uncorrectable erasure 
patterns $E_{i,0},E_{i,1},\cdots, E_{i,\ell}$, we can get the polynomial $f_i(z)$. We 
call the entire set $\{f_1(z),\cdots,f_\ell(z)\}$ as the \emph{polarization behavior} 
of $K_\ell$. One can utilize the techniques in~\cite{Hassani} to estimate the scaling 
exponent of polar codes with large kernels by replacing the transformation polynomials 
$\{z^2, 1-(1-z)^2\}$ in the traditional polar codes with the polarization behavior of 
$K_\ell$ defined above.
 	
\section{Construction of Large Self-dual Kernels}
 	
\subsection{Kernel Codes}

\noindent 
Before we find out what those uncorrectable erasure patterns are, we give the 
following definitions. Given two vectors $\vvv_1,\vvv_2$, we say $\vvv_2$ 
\emph{covers} $\vvv_1$ if $\supp(\vvv_1)\subseteq\supp(\vvv_2)$. Given a set 
$\mathcal{S}\subseteq \F_2^\ell$, we define its \emph{cover set} 
$\Delta(\mathcal{S})$ as the set of vectors that covers at least one vector in 
$\mathcal{S}$. It will be shown later, that the set of those uncorrectable erasure 
patterns are the cover set of a coset.

\begin{definition}[Kernel Codes]
Given an $\ell\times\ell$ kernel $K_\ell$, we define the \emph{kernel codes} 
$\mcC_i=\la g_{i+1},\cdots,g_\ell\ra$ for $i=0,\cdots,\ell$, and $\mcC_\ell=\{0\}$.
\end{definition}

\begin{theorem}
An erasure pattern $\eee$ is uncorrectable for $W_i$ if and only if 
$\eee\in\Delta(\mcC_{i-1}\backslash\mcC_i)$.
\label{delta}
\end{theorem}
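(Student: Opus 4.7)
The plan is to reduce the definition of uncorrectability to a single linear-algebraic condition on the difference of two candidate information vectors, and then identify the resulting object with the coset $g_i + \mcC_i$.

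First I would set $\vvv \eqdef \uuu' \oplus \uuu''$ in Definition~\ref{def_e}. The three conditions on $(\uuu', \uuu'')$ translate cleanly into conditions on $\vvv$: the agreement for $j<i$ becomes $v_j = 0$ for $j<i$; the disagreement $u'_i \neq u''_i$ becomes $v_i = 1$; and the agreement $(\uuu'K_\ell)_j = (\uuu''K_\ell)_j$ for unerased coordinates becomes $\supp(\vvv K_\ell) \subseteq \supp(\eee)$, i.e., $\eee$ covers $\vvv K_\ell$. Crucially, since $\uuu'$ can be chosen freely once $\vvv$ is fixed, the existence of the pair $(\uuu',\uuu'')$ in Definition~\ref{def_e} is equivalent to the existence of a single vector $\vvv$ with $v_j = 0$ for $j<i$, $v_i = 1$, and $\eee$ covering $\vvv K_\ell$.

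Next I would describe the set of admissible codewords $\vvv K_\ell$. Writing $\vvv K_\ell = \sum_{k \geq i} v_k g_k = g_i + \sum_{k > i} v_k g_k$, one sees that as $\vvv$ ranges over all vectors with the prescribed prefix, $\vvv K_\ell$ ranges exactly over the coset $g_i + \mcC_i$. I would then use non-singularity of $K_\ell$ (so that $g_i, g_{i+1}, \ldots, g_\ell$ are linearly independent) to argue that $g_i \notin \mcC_i$, hence this coset is disjoint from $\mcC_i$ and, together with $\mcC_i$, exhausts $\mcC_{i-1}$. This gives the set equality
\[
g_i + \mcC_i \; = \; \mcC_{i-1} \setminus \mcC_i .
\]

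Combining the two steps, $\eee$ is uncorrectable for $W_i$ iff there exists $c \in \mcC_{i-1} \setminus \mcC_i$ such that $\eee$ covers $c$, which by the definition of the cover set is exactly $\eee \in \Delta(\mcC_{i-1} \setminus \mcC_i)$. Both directions are covered symmetrically by this chain of equivalences. The only step requiring any care is the coset identification $g_i + \mcC_i = \mcC_{i-1} \setminus \mcC_i$, where one must not forget to invoke the non-singularity of $K_\ell$; the rest is a routine unpacking of definitions.
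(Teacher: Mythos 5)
Your proof is correct and follows essentially the same route as the paper's: pass to the difference vector $\vvv = \uuu'\oplus\uuu''$, observe that $\ccc = \vvv K_\ell$ lies in the coset $\mcC_{i-1}\setminus\mcC_i$ and is covered by $\eee$. Your write-up is in fact slightly more complete than the paper's, which only spells out the ``only if'' direction; your chain of equivalences, together with the explicit coset identification $g_i+\mcC_i=\mcC_{i-1}\setminus\mcC_i$ via non-singularity of $K_\ell$, handles both directions at once.
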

\begin{proof}
Here, we prove the ``only if'' direction. The other direction follows similarly. If 
$\eee$ is uncorrectable, then there exists $\uuu',\uuu''$ as described in Definition 
\ref{def_e}. So $(\uuu'-\uuu'')_j=0$ for $j<i$ and $(\uuu'-\uuu'')_i=1$. Thus 
$\ccc=(\uuu'-\uuu'')K_\ell$ is a codeword in the coset $(\mcC_{i-1}\backslash\mcC_i)$. 
Also, since $\uuu'K_\ell$ and $\uuu'K_\ell$ agree on all the unerased positions, this 
codeword $\ccc=\uuu'K_\ell-\uuu''K_\ell$ is covered by the erasure pattern $\eee$. 
So $\eee\in\Delta(\mcC_{i-1}\backslash\mcC_i)$.
\end{proof}
 	
\subsection{Self-dual Kernels and Duality Theorem}

\noindent 
We first introduce a special type of \emph{self-dual} kernels. We call an 
$\ell\times\ell$ kernel \emph{self-dual} if $\mcC_i=\mcC_{\ell-i}^\perp$ for all 
$i=0,\cdots,\ell$. Then we prove the duality theorem, which shows that the 
polarization behavior of a self-dual kernel is symmetric.

\begin{lemma}
 		If $K_\ell$ is self-dual, then
 		\begin{align}
 		\forall_{i}\ \forall_{w}:\quad E_{i,w}+E_{\ell+1-i,\ell-w}\le \binom{\ell}{w}
 		\label{dualityequation}
 		\end{align}
 		\label{dualitylemma}
\end{lemma}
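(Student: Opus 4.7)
The plan is to prove the inequality by exhibiting an injection from the weight-$w$ uncorrectable patterns for $W_i$ into the complement (within the set of all weight-$(\ell-w)$ patterns) of the uncorrectable patterns for $W_{\ell+1-i}$. Since $\binom{\ell}{\ell-w} = \binom{\ell}{w}$, such an injection immediately gives $E_{i,w} \le \binom{\ell}{w} - E_{\ell+1-i,\ell-w}$, which is exactly \eqref{dualityequation}. The natural candidate for the injection is bitwise complementation $\eee \mapsto \bar{\eee}$, which is a bijection on $\F_2^\ell$ mapping weight-$w$ vectors to weight-$(\ell-w)$ vectors, so the entire content of the lemma reduces to the claim that whenever $\eee$ is uncorrectable for $W_i$, the complement $\bar{\eee}$ is \emph{correctable} for $W_{\ell+1-i}$.

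To establish this claim I would argue by contradiction using Theorem~\ref{delta}. Suppose $\eee \in \Delta(\mcC_{i-1}\setminus \mcC_i)$ and simultaneously $\bar{\eee} \in \Delta(\mcC_{\ell-i}\setminus\mcC_{\ell+1-i})$. Then there exist codewords $\ccc \in \mcC_{i-1}\setminus\mcC_i$ with $\supp(\ccc)\subseteq\supp(\eee)$ and $\ccc' \in \mcC_{\ell-i}\setminus\mcC_{\ell+1-i}$ with $\supp(\ccc')\subseteq\supp(\bar{\eee})$. Because $\supp(\eee)$ and $\supp(\bar{\eee})$ are disjoint, this forces $\supp(\ccc)\cap\supp(\ccc')=\emptyset$, and hence the inner product $\ccc\cdot\ccc' = 0$ over $\F_2$. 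The remaining task is to show that this inner product must actually equal $1$, which delivers the contradiction.

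The key algebraic step invokes self-duality: $\mcC_{\ell-i} = \mcC_i^\perp$ and $\mcC_{\ell+1-i} = \mcC_{i-1}^\perp$, so $\ccc'$ lies in $\mcC_i^\perp \setminus \mcC_{i-1}^\perp$. Writing $\ccc = g_i + \sum_{j>i} a_j g_j$ (the coefficient of $g_i$ must be $1$ since $\ccc\notin\mcC_i$), the condition $\ccc'\in \mcC_i^\perp$ kills every term with $j>i$, leaving $\ccc\cdot\ccc' = g_i\cdot\ccc'$. On the other hand, $\ccc'\notin \mcC_{i-1}^\perp$ means $\ccc'$ fails to be orthogonal to some generator $g_j$ with $j\ge i$; combined with $\ccc'\in\mcC_i^\perp$ this forces $g_i\cdot\ccc' = 1$. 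Hence $\ccc\cdot\ccc' = 1$, contradicting orthogonality and completing the proof.

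The only subtle point I anticipate is keeping the bookkeeping of indices consistent — specifically matching $\mcC_{i-1}\setminus\mcC_i$ (which describes the uncorrectable patterns for $W_i$) with the self-dual partner $\mcC_{\ell+1-i}\subset \mcC_{\ell-i}$ (which describes those for $W_{\ell+1-i}$), and verifying that these correspond under $\mcC_j \mapsto \mcC_{\ell-j}^\perp$ in exactly the pairing needed. Everything else is a clean application of the duality $\mcC_j = \mcC_{\ell-j}^\perp$ together with the definition of kernel codes as nested subspaces differing by one generator at each step.
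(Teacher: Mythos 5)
Your proposal is correct and follows essentially the same route as the paper: complementation as the counting bijection, then a contradiction showing that $\eee$ and $\bar{\eee}$ cannot both cover codewords of the respective cosets because disjoint supports force orthogonality while the one-dimensional gap between $\mcC_{i-1}$ and $\mcC_i$ (equivalently, $g_i\cdot\ccc'=1$) forbids it. Your explicit expansion $\ccc = g_i + \sum_{j>i}a_j g_j$ is just a more concrete rendering of the paper's "one more dimension" argument, and you correctly identify the dual coset as $\mcC_i^\perp\setminus\mcC_{i-1}^\perp$ (the paper's proof writes this pair in the reverse, empty, order, which is evidently a typo).
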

\begin{proof}
Let $\eee$ be an uncorretable erasure pattern for $W_i$. Assume $\eee$ is 
uncorrectable for $W_i$ while its complement $\overline{\eee}$ is also uncorrectable 
for $W_{\ell+1-i}$, then $\eee$ covers a codeword $\ccc_1$ in 
$(\mcC_{i-1}\backslash\mcC_i)$ and $\overline{\eee}$ covers a codeword $\ccc_2$ in 
$(\mcC_{\ell-i}\backslash\mcC_{\ell+1-i})=(\mcC_{i-1}^\perp\backslash\mcC_{i}^\perp)$. 
Since $\supp(\eee)$ and $\supp(\overline{\eee})$ are disjoint, we have 
$\ccc_1\perp \ccc_2$. But since $\mcC_{i-1}$ only has one more dimension than 
$\mcC_i$, $\ccc_2\perp \mcC_i$ and $\ccc_2\perp \vvv$ would imply 
$\ccc_2\perp\mcC_{i-1}$, which is a contradiction. Therefore the complement 
$\overline{\eee}$ of every uncorrectable erasure pattern $\eee$ for  $W_i$ is 
correctable for $W_{\ell+1-i}$, which yields in the proof.
\end{proof}

\begin{theorem}[Duality Theorem]
		If $K_\ell$ is self-dual, then for $i=1,\cdots,\ell$
 		\begin{equation}
 		f_{\ell+1-i}(z)=1-f_i(1-z)
 		\end{equation}
 		\label{duality}
\end{theorem}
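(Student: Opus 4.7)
The plan is to convert the polynomial identity into a system of combinatorial equalities on the coefficients $E_{i,w}$, and then upgrade the bound of Lemma~\ref{dualitylemma} to an exact identity via a global counting argument based on capacity conservation.

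First, I would rewrite $1-f_i(1-z)$ in the Bernstein basis $\{z^{w}(1-z)^{\ell-w}\}_{w=0}^{\ell}$. Combining the expansion $1=(z+(1-z))^{\ell}=\sum_{w}\binom{\ell}{w}z^{w}(1-z)^{\ell-w}$ with the substitution $w\mapsto \ell-w$ in $f_i(1-z)=\sum_{w}E_{i,w}(1-z)^{w}z^{\ell-w}$ gives
\[
1-f_i(1-z)\;=\;\sum_{w=0}^{\ell}\left(\binom{\ell}{w}-E_{i,\ell-w}\right)z^{w}(1-z)^{\ell-w}.
\]
Since these $\ell+1$ Bernstein polynomials are linearly independent, proving the theorem reduces to establishing the identity $E_{\ell+1-i,w}+E_{i,\ell-w}=\binom{\ell}{w}$ for all $i$ and $w$. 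After swapping $w\leftrightarrow \ell-w$, this is precisely the equality case of Lemma~\ref{dualitylemma}, which currently is only known as an inequality.

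Second, to force equality I would invoke the capacity-conservation identity for invertible polarizing kernels over $\text{BEC}(z)$, namely $\sum_{i=1}^{\ell}f_i(z)=\ell z$. Expanding the right-hand side in the Bernstein basis via $\ell z=\sum_{w}w\binom{\ell}{w}z^{w}(1-z)^{\ell-w}$ and matching coefficients yields $\sum_{i=1}^{\ell}E_{i,w}=w\binom{\ell}{w}$ for every $w$. Now summing Lemma~\ref{dualitylemma} over $i=1,\dots,\ell$ produces an upper bound of $\ell\binom{\ell}{w}$, while the two partial sums on the left add up to exactly $w\binom{\ell}{w}+(\ell-w)\binom{\ell}{w}=\ell\binom{\ell}{w}$ by the conservation identity. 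Since the summed slack is zero and every individual slack term is nonnegative, each instance of Lemma~\ref{dualitylemma} must be tight, which gives the required identity and hence the theorem.

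The crucial insight, and the place I would worry about most, is the leap in the second step: recognizing that the one-sided combinatorial bound of Lemma~\ref{dualitylemma} is automatically tight once it is paired with a global quantity that is already saturated. A more direct alternative would be to establish the stronger covering statement that whenever $\eee$ is \emph{correctable} for $W_i$ its complement $\overline{\eee}$ is \emph{uncorrectable} for $W_{\ell+1-i}$; this would require a bijective argument producing a vector in $\mcC_{i-1}^{\perp}\setminus \mcC_i^{\perp}$ supported on $\supp(\overline{\eee})$ and would essentially retrace the proof of Lemma~\ref{dualitylemma} in the opposite direction using self-duality. The counting route based on capacity conservation avoids this combinatorial construction entirely and is, in my view, the cleanest path.
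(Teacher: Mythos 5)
Your proposal is correct and follows essentially the same route as the paper: both rest on Lemma~\ref{dualitylemma} together with the capacity-conservation identity $\sum_{i=1}^{\ell}f_i(z)=\ell z$, summing the per-$i$ inequalities over $i$ and observing that the total is already saturated, so each inequality must be tight. The only difference is that you carry out the saturation argument on the Bernstein-basis coefficients $E_{i,w}$ while the paper argues directly on the polynomial values $f_i(z)+f_{\ell+1-i}(1-z)\le 1$; the two are equivalent.
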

\begin{proof}
 		For all $i=1,\cdots,\ell$ we have
 		\begin{align*}
 		\looseness=-1
 		f_i&(z)+f_{\ell+1-i}(1-z)\\
 		&=\sum_{w=0}^{\ell}(E_{i,w}+E_{\ell+1-i,\ell-w})z^w(1-z)^{\ell-w}\hspace{2mm}\stackrel{(\ref{dualityequation})}{\leq} 1.
 		\numberthis\label{fiz+fi1-z}
 		\end{align*}
 		Therefore, $\sum_{i=1}^{\ell}(f_i(z)+f_{\ell+1-i}(1-z))\leq \ell$. But a polarization step is a capacity preserving transformation, which means
 		\begin{equation}
 		\sum_{i=1}^{\ell}f_i(z)+\sum_{i=1}^{\ell}f_{\ell+1-i}(1-z)=\ell z+\ell(1-z)=\ell.
 		\end{equation}
 		So all the previous inequalities must hold with equality.
\end{proof}
 	
\subsection{Kernel Construction}

\begin{table}[t!]
 		\centering
 		\begin{tabular}{|l|l|c|}
 			\hline
 			rows & kernel codes & partial distances\\\hline
 			
 			32 & $\mcC_{32}=\{\bf{0,1}\}$ & 32\\\hline
 			
 			28-31 & subcodes of $\mcC_{27}$ & 16\\
 			27 & $\mcC_{27}=$ RM(1,5) & 16\\\hline
 			
 			23-26 & subcodes of $\mcC_{12}$ & 12\\
 			22 & $\mcC_{12}=$ extended BCH(31,11,11) & 12\\\hline
 			
 			18-21 & subcodes of $\mcC_{17}$ & 8\\
 			
 			17 & $\mcC_{17}=$ RM(2,5) & 8\\\hline
 		\end{tabular}
 		\caption{Kernel codes of $K_{32}$ at the bottom half}
 		\label{table32}
 	\end{table}
 	\begin{table}[t!]
 		\centering
 		\begin{tabular}{|l|l|c|}
 			\hline
 			rows & kernel codes & partial distances\\\hline
 			
 			64 & $\mcC_{64}=\{\bf{0,1}\}$ & 64\\\hline
 			
 			59-33 & subcodes of $\mcC_{58}$ & 32\\
 			58 & $\mcC_{58}=$ RM(1,6) & 32\\\hline
 			
 			56-57 & subcodes of $\mcC_{55}$ & 28\\
 			55 & $\mcC_{55}=$  extended BCH(63,10,27) & 28\\\hline
 			
 			50-54 & subcodes of $\mcC_{49}$ & 24\\
 			49 & $\mcC_{49}=$  extended BCH(63,16,23)& 24\\\hline
 			
 			44-48 & subcodes of $\mcC_{43}$ & 16\\
 			43 & $\mcC_{43}=$  RM(2,6) & 16\\\hline
 			
 			38-42 & subcodes of $\mcC_{37}$ & 16\\
 			37 & $\mcC_{37}=$  extended cyclic(63,28,15)& 16\\\hline
 			
 			36 & $\mcC_{36}=$ (64,29,14) linear code & 14\\
 			35 & $\mcC_{35}=$ (64,30,12) linear code & 12\\
 			34 & $\mcC_{34}=$ (64,31,12) linear code & 12\\
 			33 & $\mcC_{33}=$ (64,32,12) linear code & 12\\\hline
 		\end{tabular}
 		\caption{Kernel codes of $K_{64}$ at the bottom half}
 		\label{table64}
\end{table}

\noindent 
The intuition behind our kernel construction is to \textbf{a)} mimic the polarization 
behavior of random kernels by making $f_i(z)$'s jump from $f_i(z)\in (0,\epsilon)$ to 
$(1-\epsilon,1)$ as sharp as possible (see Figure~\ref{fig_B32}). \textbf{b)} provide 
a symmetry property in which half of the polynomials are polarizing to the value of 
$0$ and the other half are polarizing to the value of $1$ as depicted in 
Figure~\ref{fig_polarize}. In each step of our construction algorithm, we make sure 
that the constructed kernel is self-dual to design a symmetric polarization behavior 
according to the the duality theorem. This allows us to focus on constructing only one 
half of the kernel. Here, we pick the bottom half. The strategy behind constructing 
the bottom half is to construct the rows in kernel one by one, while maximizing the 
\emph{partial distance}, defined below, in each step.

\begin{definition}[Partial Distances]
Given an $\ell\times\ell$ kernel $K_\ell$, we define the \emph{partial distances} 
$d_i= d_H(g_i,\mcC_i)$ for $i=1,\cdots,\ell-1$, and $d_\ell= d_H(g_\ell,0)$.
\end{definition}

\noindent 
When $z$ is close to 0, the polynomial $f_i(z)$ will be dominated by the first 
non-zero term $E_{i,w}z^w(1-z)^{(\ell-w)}$. By Theorem \ref{delta} the first non-zero 
coefficients of $f_i(z)$ is $E_{i,d_i}$. So, we aim to maximize the partial distance 
$d_i$ to make $f_i(z)$ polarize towards 0.
 	
The construction algorithm in a nutshell is described in the following. Start by 
setting $\mcC_\ell=\{0\}$. Then from the bottom upwards, construct the bottom half of 
the kernel row by row greedily with maximum possible partial distances, while 
maintaining the kernel's self-dual property. Namely for $i$ from $\ell$ to $\ell/2+1$, 
pick $v\in(\mcC_i^\perp\backslash\mcC_i)$ with the maximum partial distance 
$d_i=d(v,\mcC_i)$ to be the $i$-th row of the kernel. The construction of the other 
half follows immediately by preserving the self-duality in each step. 
 	
Let us implement the algorithm for $\ell=32$. We first pick the bottom row $g_{32}$ of 
$K_{32}$ to be the all 1 vector \textbf{1}. Then for row 27-31, we pick codewords in 
RM(1,5) with maximum partial distance 16. After that, we carefully select codewords in 
the extended BCH codes and the RM(2,5), that both have maximum partial distances, and 
preserve the self-dual property of the kernel. The kernel code $\mcC_{17}$ happens to 
be exactly the self-dual code RM(2,5). We finish our construction by filling up the 
top half and get the self-dual kernel $K_{32}$ as shown in Fig~\ref{fig:K32}. We 
construct $K_{64}$ as shown in Fig~\ref{fig:K64} similarly, except that row 33-36 are 
picked through computer search. The kernel codes at the bottom half of $K_{32}$ and 
$K_{64}$ are shown in Table~\ref{table32},\ref{table64}.
 	
\begin{figure}[t!]
 		\centering
 		\includegraphics[height=7.5cm]{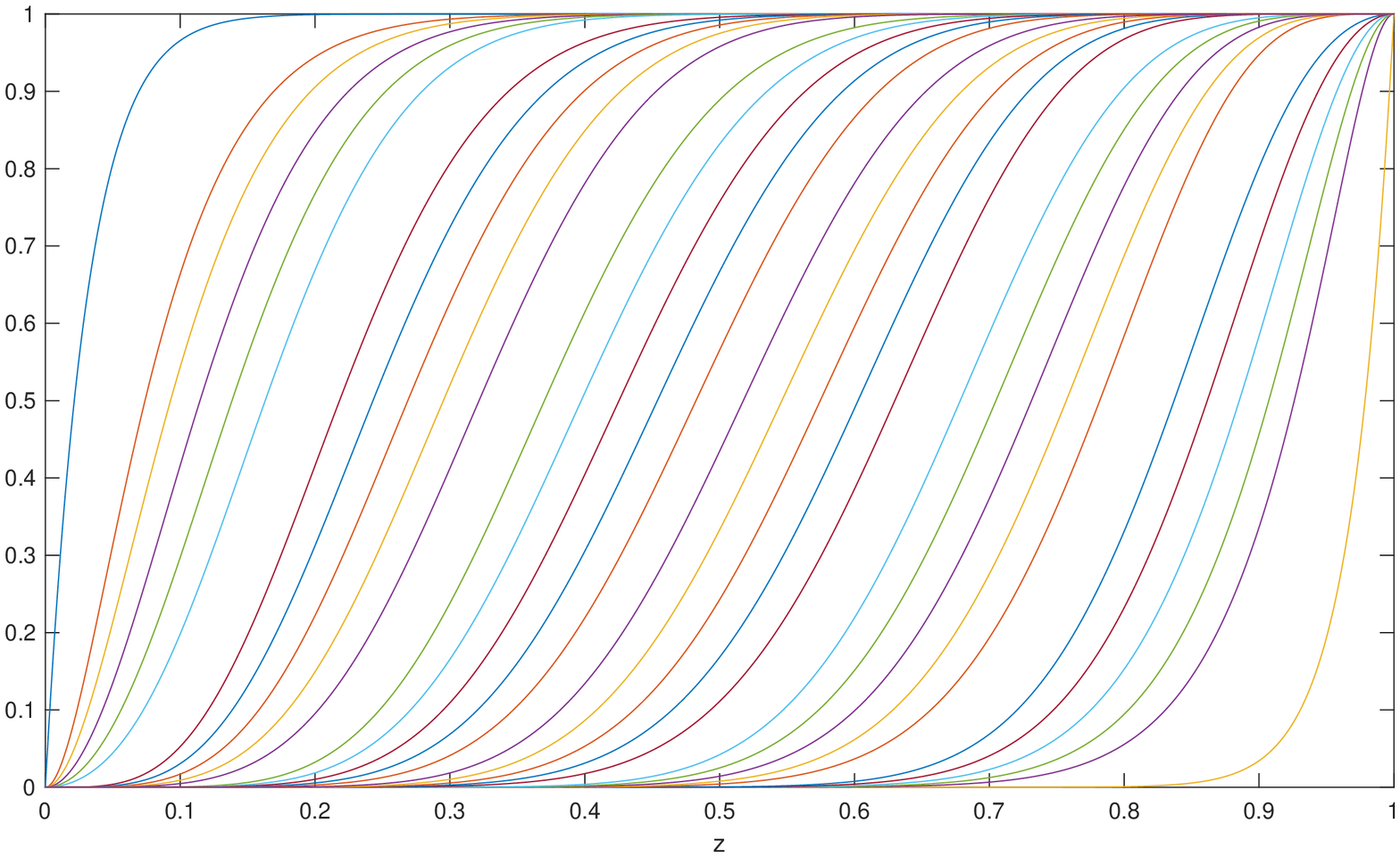}
 		\caption{Polarization behavior of kernel $K_{32}$}
 		\label{fig_B32}
\end{figure}


\section{Calculate the Polarization Behaviors}
\noindent 
So far, we presented an algorithm to construct large binary kernels with intuitively 
good scaling exponents. In this section, we address the last challenge, which is to 
efficiently derive the polarization behavior of a given kernel. The NP hardness of 
this problem was previously established in~\cite{Allerton}. In this section, we 
propose a few methods to reduce the computation complexity just enough so we can 
implement it. To this end, we present two trellis-based algorithms that can explicitly 
calculate the polarization behavior of $K_{32}$. Sadly, even these improved algorithms 
are beyond implementation for $K_{64}$. So, we present an alternative approach of 
``estimating'' the polarization behavior of $K_{64}$ with high precision using a large 
but limited number of samples from the set of all erasure patterns. One can plug the 
estimated polarization behavior into the methods described in~\cite{Hassani} and get 
$\mu(K_{64}) \simeq 2.87$. We also provide a more careful analysis to show that 
$\mu(K_{64}) \leq 2.9603$ rigorously. 
 	
\subsection{Trellis Algorithms}
\noindent 
A trellis is a graphical representation of a block code, in which every path 
represents a codeword. This representation allows us to do ML decoding with reduced 
complexity using the famous Viterbi algorithm. The Viterbi algorithm allows one to 
find the most likely path in a trellis. Besides decoding, it can also be generalized 
to find the weight distribution of the block code, given that the trellis is 
\emph{one-to-one}. A trellis is called \emph{one-to-one} if all of its paths are 
labeled distinctly. We refer the readers to~\cite{trellisbook} for the known facts 
about trellises we use in this section.
 	
In this work, we develop new theory for trellis representation for the cover sets, 
which are both nonlinear and non-rectangular. We introduced two different algorithms 
that both can construct a one-to-one trellis for the cover set 
$\Delta(\mcC_{i-1}\backslash\mcC_i)$. By efficiently representing the cover sets using 
trellises, we can use the Viterbi algorithm to calculate its weight distribution. A 
brief description of these algorithms together are given in the following. An example 
is also provided in Figure~5 for interested readers to track the steps in both 
algorithms.
 	
 	{
 		\setlength{\textfloatsep}{-0pt}
 		\SetInd{0.5em}{0.5em}
 		\begin{algorithm}[t]
 			\caption{Construct a proper trellis $T^\ast$ from $T$}
 			\For{$i=0$ to $(\ell-1)$}
 			{
 				\For{\text{every vertex }$v^\ast_i\in V^\ast_i$}
 				{
 					\For{$a\in \{0,1\}$}
 					{
 						calculate $s=$\\
 						$\{v_{i+1}\in V_{i+1}:\exists v_i\in L(v^\ast_i),(v_i,v_{i+1},a)\in E\}$\\
 						\eIf{$\exists v^\ast_{i+1}\in V^\ast_{i+1}$ with $L(v^\ast_{i+1})=s$}
 						{
 							add an edge $(v^\ast_i,v^\ast_{i+1},a)$ in $E^\ast$
 						}
 						{
 							add a vertex $v^\ast_{i+1}\in V^\ast_{i+1}$ with $L(v^\ast_{i+1})=s$\\
 							add an edge $(v^\ast_i,v^\ast_{i+1},a)$ in $E^\ast$
 						}
 					}
 				}
 			}
 			\label{algo1}
 		\end{algorithm}
 	}
 	
\vspace{3mm}
\noindent 
\textbf{\emph{Proper Trellis Algorithm}}
\vspace{3mm}
 	
\noindent 
A trellis is called \emph{proper} if edges beginning at any given vertex are labeled 
distinctly. It is known that if a trellis is proper, then it is one-to-one. So, one 
way of constructing a one-to-one trellis for $\Delta(\mcC_{i-1}\backslash\mcC_i)$ is 
to construct a proper trellis. The proper trellis algorithm has the following steps. 
\textbf{Step 1:} Construct a minimal trellis for the linear code $\mcC_i$. For every 
edges in $E_i$ where $i\in\supp(g_i)$, flip its label. We can then get a trellis for 
the coset $(\mcC_{i-1}\backslash\mcC_i)$. \textbf{Step 2.} For every label-0 edges, 
add a parallel label-1 edge. Then we get a trellis representing the cover set 
$\Delta(\mcC_{i-1}\backslash\mcC_i)$. But it is not a one-to-one trellis. 
\textbf{Step 3.} Let $T=(V,E,A)$ be the trellis we just constructed, use algorithm 1 
to convert it into a proper trellis $T^\ast=(V^\ast,E^\ast,A)$, where for 
$i=0,1,2,\cdots,\ell$,  vertices in $V^\ast_i$ are labeled uniquely by the subsets of 
$V_i$. $T^\ast$ will thus be a one-to-one trellis representing the same cover set 
$\Delta(\mcC_{i-1}\backslash\mcC_i)$.
 	
The proper trellis algorithm allows us to calculate the full polarization behavior of 
$K_{32}$, as shown in Figure~\ref{fig_B32}. Unfortunately, the computational 
complexity is still too high for $K_{64}$, in which we were able to explicitly 
calculate the erasure probability polynomials associated with the last and first 15 
rows in the kernel, as shown in Figure~\ref{fig:B64_15}.

\begin{figure}[t!]
 		\centering
		\includegraphics[height=7.5cm]{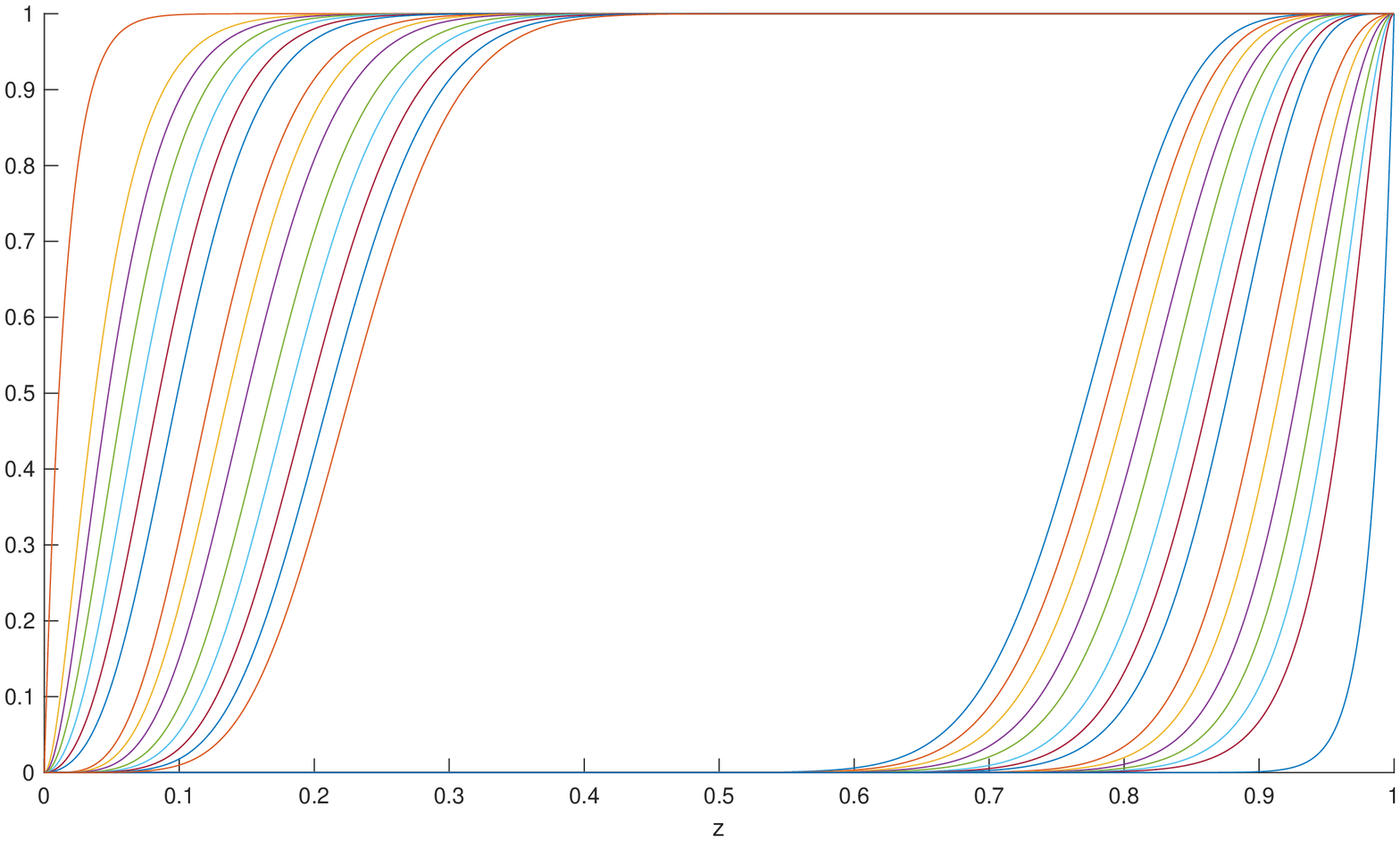}
 		\caption{Polarization behavior of the first and the last 15 rows of 
		kernel $K_{64}$}
 		\label{fig:B64_15}
\end{figure}

\begin{figure}[t!]
 		\centering
		\includegraphics[height=7.2cm]{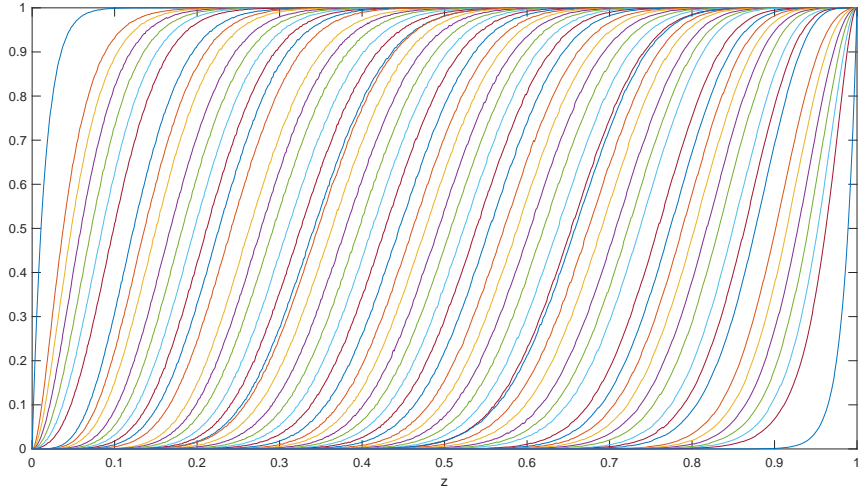}
 		\caption{Polarization behavior of kernel $K_{64}$ by Monte Carlo
		interpolation method, where $1000$ values of $z$ are evaluated
		uniformly among $[0,1]$ with $N=10^6$.}
 		\label{fig:B64_sim}
\end{figure}
 	
\vspace{3mm}
\noindent
\textbf{\emph{Stitching Trellis Algorithm}}
\vspace{3mm}
 	
\noindent 
The complexity of proper trellis algorithm depends on the number of vertices in the 
trellis. It's difficult to predict the number of vertices for general kernels, which 
could be significantly large. Hence, we also propose an alternative approach which 
also constructs a one-to-one trellis for $\Delta(\mcC_{i-1}\backslash\mcC_i)$, but 
has far less vertices. The stitching trellis algorithm differs from the proper trellis 
algorithm only by \textbf{Step 3}: Let $T=(V,E,A)$ be the trellis we just constructed, 
use algorithm 1 only for $i$ from 0 to $(\ell/2-1)$ to convert the first half of $T$ 
into a proper trellis $T_1$. Reverse algorithm 1 to convert the second half of $T$ 
into a \emph{co-proper} trellis $T_2$. Let $V_1,V_2$ be the vertex class of $T_1,T_2$ 
at time $\ell/2$. Connect $T_1$ and $T_2$ by adding an edge $(v_1,v_2)$ with label 0 
for every pair of vertices $v_1\in V_1,v_2\in V_2$ where $L(v_1)\cap L(v_2)\neq 0$. 
Then the combined trellis, called \emph{stitching trellis}, will be a one-to-one 
trellis representing the same cover set $\Delta(\mcC_{i-1}\backslash\mcC_i)$.
 	
The first and second half of the stitching trellis are proper and coproper 
respectively. Therefore, its number of vertices is bounded by $2^{\ell/2 +1}$, which 
is far less than a proper trellis. Unfortunately, the naive way of stitching the 
middle segment requires a large amount of computation. We are still searching for a 
method to reduce its complexity and we believe this can be of independent interest to 
other researchers as well. Assuming such an efficient stitching is in place, the 
stitching trellis will be much more efficient than the proper trellis, which can also 
be used in other applications.
 	
\begin{figure}[t]
 		\centering
 		
 		\makeatletter
 		\tikzset{nomorepostaction/.code={\let\tikz@postactions\pgfutil@empty}}
 		\makeatother
 		\begin{tikzpicture}[xscale=1.5,yscale = 1.5]
 		\begin{scope}[shift={(-4.5,0)},xscale=0.9,yscale=1,
 		ps/.style={
 			postaction={nomorepostaction,decorate,
 				decoration={markings,mark=at position 0.7 with {\arrow[#1]{stealth}}}
 			}
 		},
 		nd/.style={circle,fill=black}
 		]
 		\def\h{1}
 		\def\v{0.5}
 		\def\r{0.3}
 		
 		\draw [nd] (0,0) circle (\r ex);					
 		\draw [nd] (\h,\v) circle (\r ex);			
 		\draw [nd] (\h,-\v) circle (\r ex);			
 		
 		\draw [nd] (2*\h,\v) circle (\r ex);			
 		\draw [nd] (2*\h,-\v) circle (\r ex);		
 		\draw [nd] (2*\h,2*\v) circle (\r ex);		
 		\draw [nd] (2*\h,-2*\v) circle (\r ex);		
 		
 		\draw [nd] (3*\h,\v) circle (\r ex);		
 		\draw [nd] (3*\h,-\v) circle (\r ex);		
 		\draw [nd] (4*\h,0) circle (\r ex);			

 		\draw [ps] (0,0) -- node [above] {{\scriptsize 1}} (\h,\v);
 		\draw [ps] (0,0) -- node [below] {{\scriptsize 0}} (\h,-\v);
 		
 		\draw [ps] (\h,\v) -- node [below] {{\scriptsize 0}} (2*\h,\v);
 		\draw [ps] (\h,\v) -- node [above] {{\scriptsize 1}} (2*\h,2*\v);
 		\draw [ps] (\h,-\v) -- node [above] {{\scriptsize 1}} (2*\h,-\v);
 		\draw [ps] (\h,-\v) -- node [below] {{\scriptsize 0}} (2*\h,-2*\v);
 		
 		\draw [ps] (2*\h,\v) -- node [pos=0.3,above] {{\scriptsize 0}} (3*\h,-\v);
 		\draw [ps] (2*\h,-2*\v) -- node [below] {{\scriptsize 1}} (3*\h,-\v);
 		\draw [ps] (2*\h,-\v) -- node [pos=0.3,below] {{\scriptsize 1}} (3*\h,\v);
 		\draw [ps] (2*\h,2*\v) -- node [above] {{\scriptsize 0}} (3*\h,\v);
 		
 		\draw [ps] (3*\h,\v) -- node [above] {{\scriptsize 0}} (4*\h,0);
 		\draw [ps] (3*\h,-\v) -- node [below] {{\scriptsize 1}} (4*\h,0);
 		\end{scope}
 		\begin{scope}[xscale=0.9,yscale=1,
 		ps/.style={
 			postaction={nomorepostaction,decorate,
 				decoration={markings,mark=at position 0.7 with {\arrow[#1]{stealth}}}
 			}
 		},
 		nd/.style={circle,fill=black}
 		]
 		\def\h{1}
 		\def\v{0.5}
 		\def\r{0.3}
 		
 		\draw [nd] (0,0) circle (\r ex);					
 		\draw [nd] (\h,\v) circle (\r ex);			
 		\draw [nd] (\h,-\v) circle (\r ex);			
 		
 		\draw [nd] (2*\h,\v) circle (\r ex);			
 		\draw [nd] (2*\h,-\v) circle (\r ex);		
 		\draw [nd] (2*\h,2*\v) circle (\r ex);		
 		\draw [nd] (2*\h,-2*\v) circle (\r ex);		
 		
 		\draw [nd] (3*\h,\v) circle (\r ex);		
 		\draw [nd] (3*\h,-\v) circle (\r ex);		
 		\draw [nd] (4*\h,0) circle (\r ex);			
 		
 		\draw [ps] (0,0) -- node [above] {{\scriptsize 1}} (\h,\v);
 		\draw [ps] (0,0) -- node [below,pos=0.4,yshift=.3ex] {{\scriptsize 0}} (\h,-\v);
 		\draw [ps,color=red] (0,0) to [bend right=60,color=red] node [below] {{\scriptsize 1}} (\h,-\v);
 		\draw [ps] (\h,\v) -- node [below,yshift=.5ex] {{\scriptsize 0}} (2*\h,\v);
 		\draw [ps,color=red] (\h,\v)  to [bend right=60,color=red] node [below] {{\scriptsize 1}} (2*\h,\v);
 		
 		\draw [ps] (\h,\v) -- node [above] {{\scriptsize 1}} (2*\h,2*\v);
 		\draw [ps] (\h,-\v) -- node [above] {{\scriptsize 1}} (2*\h,-\v);
 		
 		\draw [ps] (\h,-\v) -- node [below,pos=0.4,yshift=.3ex] {{\scriptsize 0}} (2*\h,-2*\v);
 		\draw [ps,color=red] (\h,-\v)  to [bend right=60,color=red] node [below] {{\scriptsize 1}} (2*\h,-2*\v);
 		\draw [ps] (2*\h,\v) -- node [pos=0.2,below] {{\scriptsize 0}} (3*\h,-\v);
 		\draw [ps,color=red] (2*\h,\v) to [bend left] node [pos=0.2,above,color=red] {{\scriptsize 1}} (3*\h,-\v);
 		
 		\draw [ps] (2*\h,-2*\v) -- node [below] {{\scriptsize 1}} (3*\h,-\v);
 		\draw [ps] (2*\h,-\v) -- node [pos=0.3,below] {{\scriptsize 1}} (3*\h,\v);
 		
 		\draw [ps] (2*\h,2*\v) -- node [above,pos=0.6,yshift=-0.3ex] {{\scriptsize 0}} (3*\h,\v);
 		\draw [ps,color=red] (2*\h,2*\v) to [bend left=60,color=red] node [above] {{\scriptsize 1}} (3*\h,\v);
 		\draw [ps] (3*\h,\v) -- node [above,pos=0.6,yshift=-0.3ex] [above] {{\scriptsize 0}} (4*\h,0);
 		\draw [ps,color=red] (3*\h,\v) to [bend left=60,color=red] node [above] {{\scriptsize 1}} (4*\h,0);
 		\draw [ps] (3*\h,-\v) -- node [below] {{\scriptsize 1}} (4*\h,0);
 		\end{scope}
 		\begin{scope}[shift={(-4.5,-2.7)},xscale=0.9,yscale=1,
 		ps/.style={
 			postaction={nomorepostaction,decorate,
 				decoration={markings,mark=at position 0.7 with {\arrow[#1]{stealth}}}
 			}
 		},
 		nd/.style={circle,fill=black}
 		]
 		\def\h{1}
 		\def\v{0.5}
 		\def\r{0.3}
 		
 		\draw [nd] (0,0) circle (\r ex);					
 		\draw [nd] (\h,\v) circle (\r ex);			
 		\draw [nd] (\h,-\v) circle (\r ex);			
 		
 		\draw [nd] (2*\h,\v) circle (\r ex);			
 		\draw [nd] (2*\h,-\v) circle (\r ex);		
 		\draw [nd] (2*\h,2*\v) circle (\r ex);		
 		\draw [nd] (2*\h,-2*\v) circle (\r ex);		
 		
 		\draw [nd] (3*\h,\v) circle (\r ex);		
 		\draw [nd] (3*\h,-\v) circle (\r ex);		
 		\draw [nd] (4*\h,0) circle (\r ex);			

 		\draw [ps] (0,0) -- node [above] {{\scriptsize 0}} (\h,\v);
 		\draw [ps] (0,0) -- node [below] {{\scriptsize 1}} (\h,-\v);
 		
 		\draw [ps] (\h,\v) -- node [below] {{\scriptsize 1}} (2*\h,\v);
 		\draw [ps] (\h,\v) -- node [above] {{\scriptsize 0}} (2*\h,2*\v);
 		\draw [ps] (\h,-\v) -- node [above] {{\scriptsize 0}} (2*\h,-\v);
 		\draw [ps] (\h,-\v) -- node [below] {{\scriptsize 1}} (2*\h,-2*\v);
 		
 		\draw [ps] (2*\h,2*\v) -- node [above] {{\scriptsize 1}} (3*\h,\v);
 		\draw [ps] (2*\h,\v) -- node [pos=0.3,above] {{\scriptsize 1}} (3*\h,-\v);
 		\draw [ps] (2*\h,-\v) to [bend left=10] node [pos=0.1,above] {{\scriptsize 0}} (3*\h,\v);
 		\draw [ps] (2*\h,-\v) to [bend right=10] node [pos=0.3,below] {{\scriptsize 1}} (3*\h,\v);
 		\draw [ps] (2*\h,-2*\v) to [bend left=10] node [pos=0.2,above] {{\scriptsize 0}} (3*\h,-\v);
 		\draw [ps] (2*\h,-2*\v) to [bend right=10] node [below] {{\scriptsize 1}} (3*\h,-\v);
 		
 		\draw [ps] (3*\h,\v) -- node [above] {{\scriptsize 1}} (4*\h,0);
 		\draw [ps] (3*\h,-\v) to [bend left=10] node [pos=0.3,above] {{\scriptsize 0}} (4*\h,0);
 		\draw [ps] (3*\h,-\v) to [bend right=10] node [below] {{\scriptsize 1}} (4*\h,0);
 		\end{scope}
 		\begin{scope}[shift={(-0.35,-2.7)},xscale=0.9,yscale=1,
 		ps/.style={
 			postaction={nomorepostaction,decorate,
 				decoration={markings,mark=at position 0.7 with {\arrow[#1]{stealth}}}
 			}
 		},
 		nd/.style={circle,fill=black}
 		]
 		\def\h{1}
 		\def\v{0.5}
 		\def\r{0.3}
 		\def\dd{0.7}
 		
 		\draw [nd] (0,0) circle (\r ex);					
 		\draw [nd] (\h,\v) circle (\r ex);			
 		\draw [nd] (\h,-\v) circle (\r ex);			
 		
 		\draw [nd] (2*\h,\v) circle (\r ex);			
 		\draw [nd] (2*\h,-\v) circle (\r ex);		
 		\draw [nd] (2*\h,2*\v) circle (\r ex);		
 		\draw [nd] (2*\h,-2*\v) circle (\r ex);		
 		
 		\draw [nd] (2*\h+\dd,\v) circle (\r ex);			
 		\draw [nd] (2*\h+\dd,-\v) circle (\r ex);		
 		\draw [nd] (2*\h+\dd,2*\v) circle (\r ex);		
 		\draw [nd] (2*\h+\dd,-2*\v) circle (\r ex);		
 		
 		\draw [nd] (3*\h+\dd,\v) circle (\r ex);		
 		\draw [nd] (3*\h+\dd,-\v) circle (\r ex);		
 		\draw [nd] (4*\h+\dd,0) circle (\r ex);			

 		\draw [ps] (0,0) -- node [above] {{\scriptsize 0}} (\h,\v);
 		\draw [ps] (0,0) -- node [below] {{\scriptsize 1}} (\h,-\v);
 		
 		\draw [ps] (\h,\v) -- node [below] {{\scriptsize 0}} (2*\h,\v);
 		\draw [ps] (\h,\v) -- node [above] {{\scriptsize 1}} (2*\h,2*\v);
 		\draw [ps] (\h,-\v) -- node [above] {{\scriptsize 1}} (2*\h,-\v);
 		\draw [ps] (\h,-\v) -- node [below] {{\scriptsize 0}} (2*\h,-2*\v);
 		
 		\draw [ps] (2*\h,2*\v) -- (2*\h+\dd,2*\v);
 		\draw [ps] (2*\h,2*\v) -- (2*\h+\dd,-\v);
 		
 		\draw [ps] (2*\h,\v) -- (2*\h+\dd,-\v);
 		
 		\draw [ps] (2*\h,-\v) -- (2*\h+\dd,2*\v);
 		\draw [ps] (2*\h,-\v) -- (2*\h+\dd,\v);
 		\draw [ps] (2*\h,-\v) -- (2*\h+\dd,-\v);
 		\draw [ps] (2*\h,-\v) -- (2*\h+\dd,-2*\v);
 		
 		\draw [ps] (2*\h,-2*\v) -- (2*\h+\dd,-\v);
 		\draw [ps] (2*\h,-2*\v) -- (2*\h+\dd,-2*\v);
 		
 		\draw [ps] (2*\h+\dd,\v) -- node [below] {{\scriptsize 0}} (3*\h+\dd,\v);
 		\draw [ps] (2*\h+\dd,2*\v) -- node [above] {{\scriptsize 1}} (3*\h+\dd,\v);
 		\draw [ps] (2*\h+\dd,-\v) -- node [above] {{\scriptsize 1}} (3*\h+\dd,-\v);
 		\draw [ps] (2*\h+\dd,-2*\v) -- node [below] {{\scriptsize 0}} (3*\h+\dd,-\v);
 		
 		\draw [ps] (3*\h+\dd,\v) -- node [above] {{\scriptsize 0}} (4*\h+\dd,0);
 		\draw [ps] (3*\h+\dd,-\v) -- node [below] {{\scriptsize 1}} (4*\h+\dd,0);
 		
 		%
 		%
 		\end{scope}
 		\end{tikzpicture}
 		\caption{An example for the trellis algorithms. Top left: minimal trellis for 
		$\mcC_1\backslash\mcC_2$ in $K_4=K_2^{\otimes 2}$; top right: step 2 of 
		trellis algorithm; bottom left: step 3 of proper trellis algorithm; bottom 
		right: step 3 of the stitching algorithm.}
 		\label{fig_trellis}
\end{figure}
 	
\subsection{Monte Carlo Interpolation Method}
\noindent 
As discussed earlier, the complexity of the trellis-based algorithms grow too high for 
the intermediate bit-channels of $K_{64}$. We present a Monte Carlo algorithm to 
estimate the values of polynomials $f_i(z)$ for any given $z\in(0,1)$. We recall again 
that $f_i(z)$ denotes the erasure probability of the $i$-th bit-channel $W_i$ given 
that the communication is taking place over a BEC$(z)$. A naive yet explicit approach 
to formulate $f_i(z)$ is to cross check all $2^{\ell}$ erasure patterns to discover 
the exact ratio of which become uncorrectable from $W_i$'s point of view. Instead, we 
propose to take $N$ samples of such erasure patterns and estimate the ratio 
accordingly. We recall that the computational complexity of determining 
``correctability'' is no more than the complexity of a MAP decoder for the BEC, which 
is bounded by $O(\ell^\omega)$, where $\omega$ is the exponent of matrix 
multiplication. Therefore, the overall complexity of the proposed approximation 
method can be bounded by $O(N\ell^\omega)$. While this approach adds some uncertainty 
to our derivations, the numerical simulations suggest that  $\hat{f}_i(z)$'s for 
$\forall i$ become visibly smooth and stable at $N=10^6$, as shown in 
Figure~\ref{fig:B64_sim}. The estimated value of $\mu(K_{64}) \simeq 2.87$ is 
generated by invoking the recursive methods in~\cite{Hassani} initialized with 
$\hat{f}_i(z)$'s for $\forall i$.
 	
If the accurate values of $f_i(z)$ were known, one could use the bounding techniques 
in~\cite{Hassani} to show that

\begin{align}\label{eq:actual_bound_b0}
 	&\mu(K_{64}) \leq -\bigg({\log_{64}\bigg(\sup_{z\in(0,1)} \frac{\frac{1}{64}\sum_{i=1}^{64}g\big(f_i(z)\big)}{g(z)}\bigg)}\bigg)^{-1}
\end{align}

\noindent 
where $g(z)$ is a positive test function on $(0,1)$. However for kernel $K_{64}$, 
due to high computational complexity, 34 intermediate polarization bebavior polynomials 
are unknown. But we can still derive the strict upperbounds and lowerbounds for those
unknown $f_i(z)$s' to get the following theorem, with the proof in Appendix B.

\begin{theorem}
	\begin{align}
		\mu(K_{64})\le 2.9603
	\end{align}
\end{theorem}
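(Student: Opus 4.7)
My plan is to invoke the bound in equation (\ref{eq:actual_bound_b0}) with a carefully chosen positive test function $g$ on $(0,1)$. Concretely, showing $\mu(K_{64})\le 2.9603$ amounts to exhibiting a $g$ such that
\[
\sup_{z\in(0,1)}\frac{\frac{1}{64}\sum_{i=1}^{64}g(f_i(z))}{g(z)} \;\le\; 64^{-1/2.9603} \;\approx\; 0.246.
\]
Since $15$ of the $f_i$'s at the bottom and (by Theorem~\ref{duality}) $15$ at the top are known exactly, it suffices to majorize $g(f_i(z))$ for the $34$ intermediate indices using rigorous envelopes, and then optimize the resulting bound over $g$.

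The first step is to derive two-sided envelopes $f_i^L(z)\le f_i(z)\le f_i^U(z)$ for each unknown index $i$. The partial distances $d_i$ are already available from the bottom half of the kernel code chain in Table~\ref{table64}, and Theorem~\ref{delta} implies $E_{i,w}=0$ for $w<d_i$ and $E_{i,w}\le \binom{64}{w}$ for $w\ge d_i$. This immediately gives the crude universal majorant
\[
f_i^U(z)\;=\;\sum_{w=d_i}^{64}\binom{64}{w}z^w(1-z)^{64-w},
\]
namely the tail probability $\Pr[\mathrm{Bin}(64,z)\ge d_i]$. A matching lower envelope $f_i^L(z)$ comes from enumerating, via the proper-trellis algorithm, as many leading coefficients $E_{i,d_i},E_{i,d_i+1},\ldots$ as computation allows. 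Self-duality yields the exact equality $E_{i,w}+E_{65-i,64-w}=\binom{64}{w}$ (obtained by expanding the proof of Theorem~\ref{duality} in the Bernstein basis), which couples upper bounds at $i$ to lower bounds at $65-i$ and can sharpen both sides symmetrically.

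The second step is to plug these envelopes into (\ref{eq:actual_bound_b0}). Choosing $g$ to be nonnegative and symmetric about $1/2$ (for example $g(z)=(z(1-z))^\gamma$ for a tunable $\gamma$, or a piecewise-polynomial $g$ obtained by the numerical optimization of Mondelli--Hassani--Urbanke), one bounds $g(f_i(z))$ above by $\max\{g(f_i^L(z)),g(f_i^U(z))\}$ for the unknown indices, producing a concrete upper estimator $S(z)$ for the numerator. Verification of $S(z)/g(z)\le 64^{-1/2.9603}$ then reduces to a finite computation on a sufficiently fine grid of $z\in(0,1)$, promoted to all $z$ via a modulus-of-continuity bound: since both $S$ and $g$ are rational functions with uniformly bounded derivatives on any compact subinterval (and both vanish to a controlled order at the endpoints), the grid error can be made arbitrarily small.

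The main obstacle will be the tightness of the envelopes on the $34$ intermediate bit-channels, since those are precisely the indices where $f_i(z)$ is bounded away from $0$ and $1$ and therefore where $g(f_i(z))$ dominates the average. A loose $f_i^U$ at even one such index can inflate $S(z)$ enough to push the supremum above the $0.246$ threshold. Overcoming this will likely require pushing the trellis enumeration to compute a handful of additional coefficients $E_{i,w}$ past $d_i$ for each intermediate $i$, exploiting self-duality to propagate those coefficients to the dual index, and jointly optimizing the test function $g$ against the resulting family of envelope constraints (a finite convex program once $g$ is restricted to a finite-dimensional family). The target $2.9603$ presumably reflects the best trade-off the authors could obtain between envelope tightness and numerical tractability of the optimization over $g$.
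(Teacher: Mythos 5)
Your overall strategy is the same as the paper's: bound $\mu(K_{64})$ via the test-function inequality (\ref{eq:actual_bound_b0}), use the exactly-known $f_i$ for $i\le 15$ and $i\ge 50$, sandwich the $34$ unknown intermediate polynomials between rigorous envelopes derived from coefficient bounds plus self-duality, and then verify numerically that the resulting supremum is below $64^{-1/2.9603}\approx 0.2454$. The duality coupling you propose ($E_{i,w}+E_{65-i,64-w}=\binom{64}{w}$, which holds with equality for self-dual kernels by the proof of the duality theorem) is exactly how the paper converts upper envelopes at index $i$ into lower envelopes at index $65-i$.

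There is, however, a genuine error in your majorization step. For a test function $g$ that is symmetric about $1/2$ and unimodal (your $(z(1-z))^\gamma$, or the paper's choice), the bound $g(f_i(z))\le\max\{g(f_i^L(z)),g(f_i^U(z))\}$ is false precisely in the regime that matters: for the intermediate bit-channels there is a range of $z$ where $f_i^L(z)<1/2<f_i^U(z)$, and there $\sup_{t\in[f_i^L(z),f_i^U(z)]}g(t)=g(1/2)$, which strictly exceeds both endpoint values. The correct majorant is $g(\tilde f_i(z))$ where $\tilde f_i(z)$ clips the envelope to $1/2$ when the interval straddles it — this is exactly what the paper does, and it is what makes the resulting $\overline{b}_0(z)$ a rigorous upper bound rather than an underestimate. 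Two further quantitative concerns: (i) your upper envelope $\Pr[\mathrm{Bin}(64,z)\ge d_i]$ uses only the partial distance, whereas the paper bounds $E_{i,w}\le\min\bigl(\sum_j\binom{64-j}{w-j}A_j,\binom{64}{w}\bigr)$ using the full weight enumerator of the coset $\mcC_{i-1}\backslash\mcC_i$; the cruder binomial-tail bound may well be too loose to get under the $0.2454$ threshold. (ii) The paper does not use a generic $(z(1-z))^\gamma$ but the more refined $g(z)=\frac{1}{64}\sum_i g^\ast(f_i(z)\text{ or }\tilde f_i(z))$ with $g^\ast(z)=\sqrt{z(1-z)}$, i.e., one polarization step applied to $g^\ast$ — the specific constant $2.9603$ is tied to that choice, so your plan would need the joint optimization over $g$ you allude to in order to recover the same number.
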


 	
\section*{Acknowledgment}
\noindent 
We are grateful to Hamed Hassani and Peter Trifonov for very 
helpful discussions. We are also indebted to Peter Trifonov
for sharing the source code of his BDD program.

\begin{appendices}
\section{Kernels $K_{32}$ and $K_{64}$}
	\label{appendix_kernels}
 	 	\begin{figure}[h]
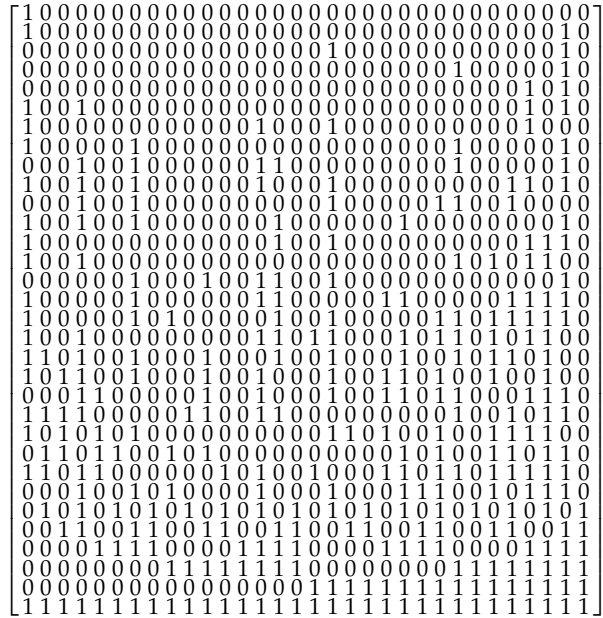

 	 		\centering
 	 		\scalebox{0.8}{
 	 			$\left[\arraycolsep=1.5pt\def\arraystretch{0.35}
 	 			\begin{array}{cccccccccccccccccccccccccccccccc} 
 	 			1& 0& 0& 0& 0& 0& 0& 0& 0& 0& 0& 0& 0& 0& 0& 0& 0& 0& 0& 0& 0& 0& 0& 0& 0& 0& 0& 0& 0& 0& 0& 0\\
 	 			1& 0& 0& 0& 0& 0& 0& 0& 0& 0& 0& 0& 0& 0& 0& 0& 0& 0& 0& 0& 0& 0& 0& 0& 0& 0& 0& 0& 0& 0& 1& 0\\ 
 	 			0& 0& 0& 0& 0& 0& 0& 0& 0& 0& 0& 0& 0& 0& 0& 0& 0& 1& 0& 0& 0& 0& 0& 0& 0& 0& 0& 0& 0& 0& 1& 0\\ 
 	 			0& 0& 0& 0& 0& 0& 0& 0& 0& 0& 0& 0& 0& 0& 0& 0& 0& 0& 0& 0& 0& 0& 0& 0& 1& 0& 0& 0& 0& 0& 1& 0\\
 	 			0& 0& 0& 0& 0& 0& 0& 0& 0& 0& 0& 0& 0& 0& 0& 0& 0& 0& 0& 0& 0& 0& 0& 0& 0& 0& 0& 0& 1& 0& 1& 0\\ 
 	 			1& 0& 0& 1& 0& 0& 0& 0& 0& 0& 0& 0& 0& 0& 0& 0& 0& 0& 0& 0& 0& 0& 0& 0& 0& 0& 0& 0& 1& 0& 1& 0\\ 
 	 			1& 0& 0& 0& 0& 0& 0& 0& 0& 0& 0& 0& 0& 1& 0& 0& 0& 1& 0& 0& 0& 0& 0& 0& 0& 0& 0& 0& 1& 0& 0& 0\\ 
 	 			1& 0& 0& 0& 0& 0& 1& 0& 0& 0& 0& 0& 0& 0& 0& 0& 0& 0& 0& 0& 0& 0& 0& 0& 1& 0& 0& 0& 0& 0& 1& 0\\ 
 	 			0& 0& 0& 1& 0& 0& 1& 0& 0& 0& 0& 0& 0& 1& 1& 0& 0& 0& 0& 0& 0& 0& 0& 0& 1& 0& 0& 0& 0& 0& 1& 0\\
 	 			1& 0& 0& 1& 0& 0& 1& 0& 0& 0& 0& 0& 0& 1& 0& 0& 0& 1& 0& 0& 0& 0& 0& 0& 0& 0& 0& 1& 1& 0& 1& 0\\
 	 			0& 0& 0& 1& 0& 0& 1& 0& 0& 0& 0& 0& 0& 0& 0& 0& 0& 1& 0& 0& 0& 0& 0& 1& 1& 0& 0& 1& 0& 0& 0& 0\\
 	 			1& 0& 0& 1& 0& 0& 1& 0& 0& 0& 0& 0& 0& 0& 1& 0& 0& 0& 0& 0& 0& 1& 0& 0& 0& 0& 0& 0& 0& 0& 1& 0\\
 	 			1& 0& 0& 0& 0& 0& 0& 0& 0& 0& 0& 0& 0& 0& 1& 0& 0& 1& 0& 0& 0& 0& 0& 0& 0& 0& 0& 0& 1& 1& 1& 0\\
 	 			1& 0& 0& 1& 0& 0& 0& 0& 0& 0& 0& 0& 0& 0& 0& 0& 0& 0& 0& 0& 0& 0& 0& 0& 1& 0& 1& 0& 1& 1& 0& 0\\
 	 			0& 0& 0& 0& 0& 0& 1& 0& 0& 0& 1& 0& 0& 1& 1& 0& 0& 1& 0& 0& 0& 0& 0& 0& 0& 0& 0& 0& 0& 0& 1& 0\\ 
 	 			1& 0& 0& 0& 0& 0& 1& 0& 0& 0& 0& 0& 0& 1& 1& 0& 0& 0& 0& 0& 1& 1& 0& 0& 0& 0& 0& 1& 1& 1& 1& 0\\ 
 	 			1& 0& 0& 0& 0& 0& 1& 0& 1& 0& 0& 0& 0& 0& 1& 0& 0& 1& 0& 0& 0& 0& 0& 1& 1& 0& 1& 1& 1& 1& 1& 0\\ 
 	 			1& 0& 0& 1& 0& 0& 0& 0& 0& 0& 0& 0& 0& 1& 1& 0& 1& 1& 0& 0& 0& 1& 0& 1& 1& 0& 1& 0& 1& 1& 0& 0\\ 
 	 			1& 1& 0& 1& 0& 0& 1& 0& 0& 0& 1& 0& 0& 0& 1& 0& 0& 1& 0& 0& 0& 1& 0& 0& 1& 0& 1& 1& 0& 1& 0& 0\\ 
 	 			1& 0& 1& 1& 0& 0& 1& 0& 0& 0& 1& 0& 0& 1& 0& 0& 0& 1& 0& 0& 1& 1& 0& 1& 0& 0& 1& 0& 0& 1& 0& 0\\
 	 			0& 0& 0& 1& 1& 0& 0& 0& 0& 0& 1& 0& 0& 1& 0& 0& 0& 1& 0& 0& 1& 1& 0& 1& 1& 0& 0& 0& 1& 1& 1& 0\\ 
 	 			1& 1& 1& 1& 0& 0& 0& 0& 0& 1& 1& 0& 0& 1& 1& 0& 0& 0& 0& 0& 0& 0& 0& 0& 1& 0& 0& 1& 0& 1& 1& 0\\ 
 	 			1& 0& 1& 0& 1& 0& 1& 0& 0& 0& 0& 0& 0& 0& 0& 0& 0& 1& 1& 0& 1& 0& 0& 1& 0& 0& 1& 1& 1& 1& 0& 0\\ 
 	 			0& 1& 1& 0& 1& 1& 0& 0& 1& 0& 1& 0& 0& 0& 0& 0& 0& 0& 0& 0& 0& 1& 0& 1& 0& 0& 1& 1& 0& 1& 1& 0\\
 	 			1& 1& 0& 1& 1& 0& 0& 0& 0& 0& 0& 1& 0& 1& 0& 0& 1& 0& 0& 0& 1& 1& 0& 1& 1& 0& 1& 1& 1& 1& 1& 0\\ 
 	 			0& 0& 0& 1& 0& 0& 1& 0& 1& 0& 0& 0& 0& 1& 0& 0& 0& 1& 0& 0& 0& 1& 1& 1& 0& 0& 1& 0& 1& 1& 1& 0\\ 
 	 			0& 1& 0& 1& 0& 1& 0& 1& 0& 1& 0& 1& 0& 1& 0& 1& 0& 1& 0& 1& 0& 1& 0& 1& 0& 1& 0& 1& 0& 1& 0& 1\\ 
 	 			0& 0& 1& 1& 0& 0& 1& 1& 0& 0& 1& 1& 0& 0& 1& 1& 0& 0& 1& 1& 0& 0& 1& 1& 0& 0& 1& 1& 0& 0& 1& 1\\
 	 			0& 0& 0& 0& 1& 1& 1& 1& 0& 0& 0& 0& 1& 1& 1& 1& 0& 0& 0& 0& 1& 1& 1& 1& 0& 0& 0& 0& 1& 1& 1& 1\\ 
 	 			0& 0& 0& 0& 0& 0& 0& 0& 1& 1& 1& 1& 1& 1& 1& 1& 0& 0& 0& 0& 0& 0& 0& 0& 1& 1& 1& 1& 1& 1& 1& 1\\  
 	 			0& 0& 0& 0& 0& 0& 0& 0& 0& 0& 0& 0& 0& 0& 0& 0& 1& 1& 1& 1& 1& 1& 1& 1& 1& 1& 1& 1& 1& 1& 1& 1\\ 
 	 			1& 1& 1& 1& 1& 1& 1& 1& 1& 1& 1& 1& 1& 1& 1& 1& 1& 1& 1& 1& 1& 1& 1& 1& 1& 1& 1& 1& 1& 1& 1& 1\\ 
 	 			\end{array}\right]$
 	 		}
 	 		\caption{Kernel $K_{32}$}
			\label{fig:K32}
 	 	\end{figure}
\begin{figure}[p]
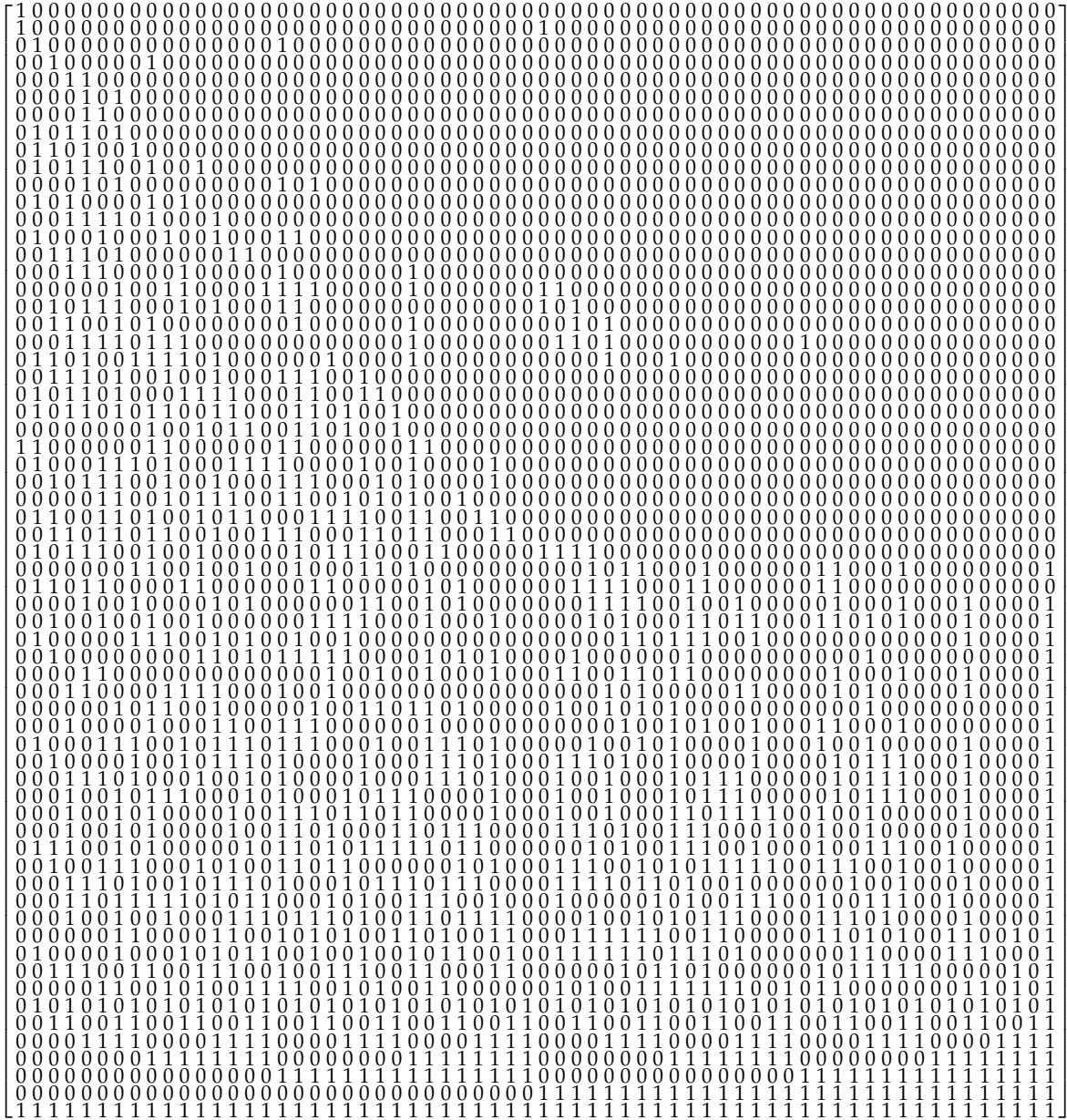

	\centering
	\scalebox{0.8}{
		$\left[\arraycolsep=1.5pt\def\arraystretch{0.35}
		\begin{array}{cccccccccccccccccccccccccccccccccccccccccccccccccccccccccccccccc}
		1&0&0&0&0&0&0&0&0&0&0&0&0&0&0&0&0&0&0&0&0&0&0&0&0&0&0&0&0&0&0&0&0&0&0&0&0&0&0&0&0&0&0&0&0&0&0&0&0&0&0&0&0&0&0&0&0&0&0&0&0&0&0&0\\
		1&0&0&0&0&0&0&0&0&0&0&0&0&0&0&0&0&0&0&0&0&0&0&0&0&0&0&0&0&0&0&0&1&0&0&0&0&0&0&0&0&0&0&0&0&0&0&0&0&0&0&0&0&0&0&0&0&0&0&0&0&0&0&0\\
		0&1&0&0&0&0&0&0&0&0&0&0&0&0&0&0&1&0&0&0&0&0&0&0&0&0&0&0&0&0&0&0&0&0&0&0&0&0&0&0&0&0&0&0&0&0&0&0&0&0&0&0&0&0&0&0&0&0&0&0&0&0&0&0\\
		0&0&1&0&0&0&0&0&1&0&0&0&0&0&0&0&0&0&0&0&0&0&0&0&0&0&0&0&0&0&0&0&0&0&0&0&0&0&0&0&0&0&0&0&0&0&0&0&0&0&0&0&0&0&0&0&0&0&0&0&0&0&0&0\\
		0&0&0&1&1&0&0&0&0&0&0&0&0&0&0&0&0&0&0&0&0&0&0&0&0&0&0&0&0&0&0&0&0&0&0&0&0&0&0&0&0&0&0&0&0&0&0&0&0&0&0&0&0&0&0&0&0&0&0&0&0&0&0&0\\
		0&0&0&0&1&0&1&0&0&0&0&0&0&0&0&0&0&0&0&0&0&0&0&0&0&0&0&0&0&0&0&0&0&0&0&0&0&0&0&0&0&0&0&0&0&0&0&0&0&0&0&0&0&0&0&0&0&0&0&0&0&0&0&0\\
		0&0&0&0&1&1&0&0&0&0&0&0&0&0&0&0&0&0&0&0&0&0&0&0&0&0&0&0&0&0&0&0&0&0&0&0&0&0&0&0&0&0&0&0&0&0&0&0&0&0&0&0&0&0&0&0&0&0&0&0&0&0&0&0\\
		0&1&0&1&1&0&1&0&0&0&0&0&0&0&0&0&0&0&0&0&0&0&0&0&0&0&0&0&0&0&0&0&0&0&0&0&0&0&0&0&0&0&0&0&0&0&0&0&0&0&0&0&0&0&0&0&0&0&0&0&0&0&0&0\\
		0&1&1&0&1&0&0&1&0&0&0&0&0&0&0&0&0&0&0&0&0&0&0&0&0&0&0&0&0&0&0&0&0&0&0&0&0&0&0&0&0&0&0&0&0&0&0&0&0&0&0&0&0&0&0&0&0&0&0&0&0&0&0&0\\
		0&1&0&1&1&1&0&0&1&0&0&1&0&0&0&0&0&0&0&0&0&0&0&0&0&0&0&0&0&0&0&0&0&0&0&0&0&0&0&0&0&0&0&0&0&0&0&0&0&0&0&0&0&0&0&0&0&0&0&0&0&0&0&0\\
		0&0&0&0&1&0&1&0&0&0&0&0&0&0&0&0&1&0&1&0&0&0&0&0&0&0&0&0&0&0&0&0&0&0&0&0&0&0&0&0&0&0&0&0&0&0&0&0&0&0&0&0&0&0&0&0&0&0&0&0&0&0&0&0\\
		0&1&0&1&0&0&0&0&1&0&1&0&0&0&0&0&0&0&0&0&0&0&0&0&0&0&0&0&0&0&0&0&0&0&0&0&0&0&0&0&0&0&0&0&0&0&0&0&0&0&0&0&0&0&0&0&0&0&0&0&0&0&0&0\\
		0&0&0&1&1&1&1&0&1&0&0&0&1&0&0&0&0&0&0&0&0&0&0&0&0&0&0&0&0&0&0&0&0&0&0&0&0&0&0&0&0&0&0&0&0&0&0&0&0&0&0&0&0&0&0&0&0&0&0&0&0&0&0&0\\
		0&1&0&0&0&1&0&0&0&1&0&0&1&0&0&0&1&1&0&0&0&0&0&0&0&0&0&0&0&0&0&0&0&0&0&0&0&0&0&0&0&0&0&0&0&0&0&0&0&0&0&0&0&0&0&0&0&0&0&0&0&0&0&0\\
		0&0&1&1&1&0&1&0&0&0&0&0&0&1&1&0&0&0&0&0&0&0&0&0&0&0&0&0&0&0&0&0&0&0&0&0&0&0&0&0&0&0&0&0&0&0&0&0&0&0&0&0&0&0&0&0&0&0&0&0&0&0&0&0\\
		0&0&0&1&1&1&0&0&0&0&1&0&0&0&0&0&1&0&0&0&0&0&0&0&1&0&0&0&0&0&0&0&0&0&0&0&0&0&0&0&0&0&0&0&0&0&0&0&0&0&0&0&0&0&0&0&0&0&0&0&0&0&0&0\\
		0&0&0&0&0&0&1&0&0&1&1&0&0&0&0&1&1&1&1&0&0&0&0&0&1&0&0&0&0&0&0&0&1&1&0&0&0&0&0&0&0&0&0&0&0&0&0&0&0&0&0&0&0&0&0&0&0&0&0&0&0&0&0&0\\
		0&0&1&0&1&1&1&0&0&0&1&0&1&0&0&0&1&1&0&0&0&0&0&0&0&0&0&0&0&0&0&0&1&0&1&0&0&0&0&0&0&0&0&0&0&0&0&0&0&0&0&0&0&0&0&0&0&0&0&0&0&0&0&0\\
		0&0&1&1&0&0&1&0&1&0&0&0&0&0&0&0&0&1&0&0&0&0&0&0&1&0&0&0&0&0&0&0&0&0&1&0&1&0&0&0&0&0&0&0&0&0&0&0&0&0&0&0&0&0&0&0&0&0&0&0&0&0&0&0\\
		0&0&0&1&1&1&1&0&1&1&1&0&0&0&0&0&0&0&0&0&0&0&0&0&1&0&0&0&0&0&0&0&0&1&1&0&1&0&0&0&0&0&0&0&0&0&0&0&1&0&0&0&0&0&0&0&0&0&0&0&0&0&0&0\\
		0&1&1&0&1&0&0&1&1&1&1&0&1&0&0&0&0&0&0&1&0&0&0&0&1&0&0&0&0&0&0&0&0&0&0&0&1&0&0&0&1&0&0&0&0&0&0&0&0&0&0&0&0&0&0&0&0&0&0&0&0&0&0&0\\
		0&0&1&1&1&0&1&0&0&1&0&0&1&0&0&0&1&1&1&0&0&1&0&0&0&0&0&0&0&0&0&0&0&0&0&0&0&0&0&0&0&0&0&0&0&0&0&0&0&0&0&0&0&0&0&0&0&0&0&0&0&0&0&0\\
		0&1&0&1&1&0&1&0&0&0&1&1&1&1&0&0&0&1&1&0&0&1&1&0&0&0&0&0&0&0&0&0&0&0&0&0&0&0&0&0&0&0&0&0&0&0&0&0&0&0&0&0&0&0&0&0&0&0&0&0&0&0&0&0\\
		0&1&0&1&1&0&1&0&1&1&0&0&1&1&0&0&0&1&1&0&1&0&0&1&0&0&0&0&0&0&0&0&0&0&0&0&0&0&0&0&0&0&0&0&0&0&0&0&0&0&0&0&0&0&0&0&0&0&0&0&0&0&0&0\\
		0&0&0&0&0&0&0&0&1&0&0&1&0&1&1&0&0&1&1&0&1&0&0&1&0&0&0&0&0&0&0&0&0&0&0&0&0&0&0&0&0&0&0&0&0&0&0&0&0&0&0&0&0&0&0&0&0&0&0&0&0&0&0&0\\
		1&1&0&0&0&0&0&0&1&1&0&0&0&0&0&0&1&1&0&0&0&0&0&0&1&1&0&0&0&0&0&0&0&0&0&0&0&0&0&0&0&0&0&0&0&0&0&0&0&0&0&0&0&0&0&0&0&0&0&0&0&0&0&0\\
		0&1&0&0&0&1&1&1&0&1&0&0&0&1&1&1&1&0&0&0&0&1&0&0&1&0&0&0&0&1&0&0&0&0&0&0&0&0&0&0&0&0&0&0&0&0&0&0&0&0&0&0&0&0&0&0&0&0&0&0&0&0&0&0\\
		0&0&1&0&1&1&1&0&0&1&0&0&1&0&0&0&1&1&1&0&0&0&1&0&1&0&0&0&0&1&0&0&0&0&0&0&0&0&0&0&0&0&0&0&0&0&0&0&0&0&0&0&0&0&0&0&0&0&0&0&0&0&0&0\\
		0&0&0&0&0&1&1&0&0&1&0&1&1&1&0&0&1&1&0&0&1&0&1&0&1&0&0&1&0&0&0&0&0&0&0&0&0&0&0&0&0&0&0&0&0&0&0&0&0&0&0&0&0&0&0&0&0&0&0&0&0&0&0&0\\
		0&1&1&0&0&1&1&0&1&0&0&1&0&1&1&0&0&0&1&1&1&1&0&0&1&1&0&0&1&1&0&0&0&0&0&0&0&0&0&0&0&0&0&0&0&0&0&0&0&0&0&0&0&0&0&0&0&0&0&0&0&0&0&0\\
		0&0&1&1&0&1&1&0&1&1&0&0&1&0&0&1&1&1&0&0&0&1&1&0&1&1&0&0&0&1&1&0&0&0&0&0&0&0&0&0&0&0&0&0&0&0&0&0&0&0&0&0&0&0&0&0&0&0&0&0&0&0&0&0\\
		0&1&0&1&1&1&0&0&1&0&0&1&0&0&0&0&0&1&0&1&1&1&0&0&0&1&1&0&0&0&0&0&1&1&1&1&0&0&0&0&0&0&0&0&0&0&0&0&0&0&0&0&0&0&0&0&0&0&0&0&0&0&0&0\\
		0&0&0&0&0&0&0&1&1&0&0&1&0&0&1&0&0&1&0&0&0&1&1&0&1&0&0&0&0&0&0&0&0&0&0&1&0&1&1&0&0&0&1&0&0&0&0&0&0&1&1&0&0&0&1&0&0&0&0&0&0&0&0&1\\
		0&1&1&0&1&1&0&0&0&0&1&1&0&0&0&0&0&0&1&1&0&0&0&0&0&1&0&1&0&0&0&0&0&0&1&1&1&1&0&0&0&1&1&0&0&0&0&0&0&1&1&0&0&0&0&0&0&0&0&0&0&0&0&0\\
		0&0&0&0&1&0&0&1&0&0&0&0&1&0&1&0&0&0&0&0&0&1&1&0&0&1&0&1&0&0&0&0&0&0&0&1&1&1&1&0&0&1&0&0&1&0&0&0&0&0&1&0&0&0&1&0&0&0&1&0&0&0&0&1\\
		0&0&1&0&0&1&0&0&1&0&0&1&0&0&0&0&0&0&1&1&1&1&0&0&0&1&0&0&0&1&0&0&0&0&0&1&0&1&0&0&0&1&1&0&1&1&0&0&0&1&1&0&1&0&1&0&0&0&1&0&0&0&0&1\\
		0&1&0&0&0&0&0&1&1&1&0&0&1&0&1&0&0&1&0&0&1&0&0&0&0&0&0&0&0&0&0&0&0&0&0&0&0&1&1&0&1&1&1&0&0&1&0&0&0&0&0&0&0&0&0&0&0&0&1&0&0&0&0&1\\
		0&0&1&0&0&0&0&0&0&0&0&1&1&0&1&0&1&1&1&1&1&0&0&0&0&1&0&1&0&1&0&0&0&0&1&0&0&0&0&0&0&1&0&0&0&0&0&0&0&0&0&0&1&0&0&0&0&0&0&0&0&0&0&1\\
		0&0&0&0&1&1&0&0&0&0&0&0&0&0&0&0&0&0&0&1&0&0&1&0&0&1&0&0&0&1&0&0&0&1&1&0&0&1&1&0&1&1&0&0&0&0&0&0&0&0&1&0&0&0&1&0&0&0&1&0&0&0&0&1\\
		0&0&0&1&1&0&0&0&0&1&1&1&1&0&0&0&1&0&0&1&0&0&0&0&0&0&0&0&0&0&0&0&0&0&0&0&1&0&1&0&0&0&0&0&1&1&0&0&0&0&1&0&1&0&0&0&0&0&1&0&0&0&0&1\\
		0&0&0&0&0&0&1&0&1&1&0&0&1&0&0&0&0&0&1&0&0&1&1&0&1&1&0&1&0&0&0&0&0&1&0&0&1&0&1&0&1&0&0&0&0&0&0&0&0&0&0&0&1&0&0&0&0&0&0&0&0&0&0&1\\
		0&0&0&1&0&0&0&0&1&0&0&0&1&1&0&0&1&1&1&0&0&0&0&0&0&1&0&0&0&0&0&0&0&0&0&0&0&1&0&0&1&0&1&0&0&1&0&0&0&1&1&0&0&0&1&0&0&0&0&0&0&0&0&1\\
		0&1&0&0&0&1&1&1&0&0&1&0&1&1&1&0&1&1&1&0&0&0&1&0&0&1&1&1&0&1&0&0&0&0&0&1&0&0&1&0&1&0&0&0&0&1&0&0&0&1&0&0&1&0&0&0&0&0&1&0&0&0&0&1\\
		0&0&1&0&0&0&0&1&0&0&1&0&1&1&1&0&1&0&0&0&0&1&0&0&0&1&1&1&0&1&0&0&0&1&1&1&0&1&0&0&1&0&0&0&0&1&0&0&0&0&1&0&1&1&1&0&0&0&1&0&0&0&0&1\\
		0&0&0&1&1&1&0&1&0&0&0&1&0&0&1&0&1&0&0&0&0&1&0&0&0&1&1&1&0&1&0&0&0&1&0&0&1&0&0&0&1&0&1&1&1&0&0&0&0&0&1&0&1&1&1&0&0&0&1&0&0&0&0&1\\
		0&0&0&1&0&0&1&0&1&1&1&0&0&0&1&0&1&0&0&0&1&0&1&1&1&0&0&0&0&1&0&0&0&1&0&0&1&0&0&0&1&0&1&1&1&0&0&0&0&0&1&0&1&1&1&0&0&0&1&0&0&0&0&1\\
		0&0&0&1&0&0&1&0&1&0&0&0&0&1&0&0&1&1&1&0&1&1&0&1&1&0&0&0&0&1&0&0&0&1&0&0&1&0&0&0&1&1&0&1&1&1&1&0&0&1&0&0&1&0&0&0&0&0&1&0&0&0&0&1\\
		0&0&0&1&0&0&1&0&1&0&0&0&0&1&0&0&1&1&0&1&0&0&0&1&1&0&1&1&1&0&0&0&0&1&1&1&0&1&0&0&1&1&1&0&0&0&1&0&0&1&0&0&1&0&0&0&0&0&1&0&0&0&0&1\\
		0&1&1&1&0&0&1&0&1&0&0&0&0&0&1&0&1&1&0&1&0&1&1&1&1&1&0&1&1&0&0&0&0&0&0&1&0&1&0&0&1&1&1&0&0&1&0&0&0&1&0&0&1&1&1&0&0&1&0&0&0&0&0&1\\
		0&0&1&0&0&1&1&1&0&0&0&1&0&1&0&0&1&1&0&1&1&0&0&0&0&0&0&1&0&1&0&0&0&1&1&1&0&0&1&0&1&0&1&1&1&1&1&0&0&1&1&1&0&0&1&0&0&1&0&0&0&0&0&1\\
		0&0&0&1&1&1&0&1&0&0&1&0&1&1&1&0&1&0&0&0&1&0&1&1&1&0&1&1&1&0&0&0&0&1&1&1&1&0&1&1&0&1&0&0&1&0&0&0&0&0&0&1&0&0&1&0&0&0&1&0&0&0&0&1\\
		0&0&0&1&1&0&1&1&1&1&1&0&1&0&1&1&0&0&0&1&0&1&0&0&1&1&1&0&0&1&0&0&0&1&0&0&0&0&0&1&0&1&0&0&1&1&1&0&0&1&0&0&1&1&1&0&0&1&0&0&0&0&0&1\\
		0&0&0&1&0&0&1&0&0&1&0&0&0&1&1&1&0&1&1&1&0&1&0&0&1&1&0&1&1&1&1&0&0&0&0&1&0&0&1&0&1&0&1&1&1&0&0&0&0&1&1&1&0&1&0&0&0&0&1&0&0&0&0&1\\
		0&0&0&0&0&0&1&1&0&0&0&0&1&1&0&0&1&0&1&0&1&0&0&1&1&0&1&0&0&1&1&0&0&0&1&1&1&1&1&1&0&0&1&1&0&0&0&0&0&1&1&0&1&0&1&0&0&1&1&0&0&1&0&1\\
		0&1&0&0&0&0&1&0&0&0&1&0&1&0&1&1&0&0&1&0&0&1&0&0&1&0&1&1&0&0&1&0&0&1&1&1&1&1&1&0&1&1&1&0&1&0&0&0&0&0&0&1&1&0&0&0&0&1&1&1&0&0&0&1\\
		0&0&1&1&1&0&0&1&1&0&0&1&1&1&0&0&1&0&0&1&1&1&0&0&1&1&0&0&0&1&1&0&0&0&0&0&0&1&0&1&1&0&1&0&0&0&0&0&0&1&0&1&1&1&1&1&0&0&0&0&0&1&0&1\\
		0&0&0&0&0&1&1&0&0&1&0&1&0&0&1&1&1&1&0&0&1&0&1&0&0&1&1&0&0&0&0&0&0&1&0&1&0&0&1&1&1&1&1&1&1&0&0&1&0&1&1&0&0&0&0&0&0&0&1&1&0&1&0&1\\
		0&1&0&1&0&1&0&1&0&1&0&1&0&1&0&1&0&1&0&1&0&1&0&1&0&1&0&1&0&1&0&1&0&1&0&1&0&1&0&1&0&1&0&1&0&1&0&1&0&1&0&1&0&1&0&1&0&1&0&1&0&1&0&1\\
		0&0&1&1&0&0&1&1&0&0&1&1&0&0&1&1&0&0&1&1&0&0&1&1&0&0&1&1&0&0&1&1&0&0&1&1&0&0&1&1&0&0&1&1&0&0&1&1&0&0&1&1&0&0&1&1&0&0&1&1&0&0&1&1\\
		0&0&0&0&1&1&1&1&0&0&0&0&1&1&1&1&0&0&0&0&1&1&1&1&0&0&0&0&1&1&1&1&0&0&0&0&1&1&1&1&0&0&0&0&1&1&1&1&0&0&0&0&1&1&1&1&0&0&0&0&1&1&1&1\\
		0&0&0&0&0&0&0&0&1&1&1&1&1&1&1&1&0&0&0&0&0&0&0&0&1&1&1&1&1&1&1&1&0&0&0&0&0&0&0&0&1&1&1&1&1&1&1&1&0&0&0&0&0&0&0&0&1&1&1&1&1&1&1&1\\
		0&0&0&0&0&0&0&0&0&0&0&0&0&0&0&0&1&1&1&1&1&1&1&1&1&1&1&1&1&1&1&1&0&0&0&0&0&0&0&0&0&0&0&0&0&0&0&0&1&1&1&1&1&1&1&1&1&1&1&1&1&1&1&1\\
		0&0&0&0&0&0&0&0&0&0&0&0&0&0&0&0&0&0&0&0&0&0&0&0&0&0&0&0&0&0&0&0&1&1&1&1&1&1&1&1&1&1&1&1&1&1&1&1&1&1&1&1&1&1&1&1&1&1&1&1&1&1&1&1\\
		1&1&1&1&1&1&1&1&1&1&1&1&1&1&1&1&1&1&1&1&1&1&1&1&1&1&1&1&1&1&1&1&1&1&1&1&1&1&1&1&1&1&1&1&1&1&1&1&1&1&1&1&1&1&1&1&1&1&1&1&1&1&1&1
		\end{array}\right]$
	}
	\caption{Kernel $K_{64}$}
	\label{fig:K64}
\end{figure}

\section{Proof for Theorem 3}
Given an $\ell\times\ell$ kernel with polarization behavior $\{f_1,\cdots,f_\ell\}$,
for a fixed $z\in[0,1]$, we can define the process 
\begin{equation}
	Z_0 = z,\qquad Z_{n+1}=
	\begin{cases}
		f_1(Z_n) & \text{w.p. }1/\ell\\
		f_2(Z_n) & \text{w.p. }1/\ell\\
		\vdots & \vdots \\
		f_1(Z_n) & \text{w.p. }1/\ell\\
	\end{cases}
\end{equation}
First lets recall the scaling assumption
\begin{assumption}
	There exists $\mu\in(0,\infty)$ such that, for any $z,a,b\in(0,1)$ such that $a<b$,
	The limit $\lim_{n\rightarrow\infty}\ell^{\frac{n}{\mu}}\text{Pr}(Z_n\in[a,b])$ 
	exists in $(0,\infty)$.
\end{assumption}
\noindent For a generic test function $g:[0,1]\rightarrow[0,1]$, define the 
sequence of functions $\{g_n\}_{n\in\mathbb{N}}$ as $g_n:[0,1]\rightarrow[0,1]$ that
\begin{equation}
	g_n(z)=\mathbb{E}[g(Z_n)\mid Z_0=z]
\end{equation}
Then this sequence of functions satisfies the recursive relation
\begin{equation}
	g_0(z)=g(z),\qquad g_{n+1}(z)=\frac{1}{\ell}\sum_{i=1}^\ell g_n(f_i(z))
\end{equation}
Our approach of bounding $\mu(K_{64})$ has the following steps: (1) Find a suitable
test function $g(z)$; (2) provide an upperbound on the polarizing speed of the sequence
$\{g_n(z)\}_{n\in\mathbb{N}}$ and (3) turn this upperbound into bound for $\mu(K_{64})$.
We here define the sequence $\{b_n\}_{n\in\mathbb{N}}$ to measure the polarizing speed
of $\{g_n(z)\}_{n\in\mathbb{N}}$.
\begin{definition}
\begin{equation*}
	b_n(z)=\frac{g_{n+1}(z)}{g_n(z)},\qquad b_n=\sup_{z\in(0,1)}b_n(z)
\end{equation*}
\end{definition}
\noindent We can prove that $\{b_n\}_{n\in\mathbb{N}}$ is a decreasing sequence.
\begin{lemma}
	$\{b_n\}_{n\in\mathbb{N}}$ is a decreasing sequence.
\end{lemma}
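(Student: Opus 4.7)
The plan is to use the recursive definition of $g_{n+1}$ together with the definition of $b_n$ as a supremum ratio, in a single short weighted-average argument. The key observation is that $b_n$ controls $g_{n+1}/g_n$ pointwise on all of $(0,1)$, and the recursion expresses $g_{n+2}(z)$ and $g_{n+1}(z)$ as averages of $g_{n+1}$ and $g_n$ evaluated at the same points $f_i(z)$.

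First I would unfold the ratio $b_{n+1}(z)$ via the recursion:
\begin{equation*}
b_{n+1}(z) \;=\; \frac{g_{n+2}(z)}{g_{n+1}(z)} \;=\; \frac{\tfrac{1}{\ell}\sum_{i=1}^{\ell} g_{n+1}(f_i(z))}{\tfrac{1}{\ell}\sum_{i=1}^{\ell} g_{n}(f_i(z))}.
\end{equation*}
Next I would invoke the definition of $b_n$: for every $w\in(0,1)$, $g_{n+1}(w) \le b_n\, g_n(w)$. Applying this pointwise at each $w = f_i(z) \in (0,1)$ (using that the $f_i$'s map $(0,1)$ into $(0,1)$, which follows from the BEC interpretation of the polarization behavior), the numerator is bounded by $b_n$ times the denominator:
\begin{equation*}
\tfrac{1}{\ell}\sum_{i=1}^{\ell} g_{n+1}(f_i(z)) \;\le\; b_n \cdot \tfrac{1}{\ell}\sum_{i=1}^{\ell} g_{n}(f_i(z)).
\end{equation*}
Therefore $b_{n+1}(z) \le b_n$ for every $z\in(0,1)$, and taking the supremum over $z$ yields $b_{n+1} \le b_n$.

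The only subtlety is making sure this division and the pointwise inequality are well-defined, i.e.\ that $g_n(w) > 0$ on $(0,1)$ for all $n$; this follows inductively from the chosen test function $g$ being positive on $(0,1)$ and from the fact that each $f_i$ takes $(0,1)$ into $(0,1)$, so $g_{n+1}$ inherits strict positivity as a convex combination of strictly positive values. I expect no real obstacle here — the argument is essentially one line once the recursion is substituted; the only bookkeeping is verifying that the supremum inequality $g_{n+1}\le b_n g_n$ really is a valid pointwise bound (it is, by the definition of supremum) so that it can be applied inside an averaging step.
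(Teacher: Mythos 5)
Your argument is correct and is essentially the paper's own proof: both substitute the recursion $g_{n+1}(z)=\frac{1}{\ell}\sum_i g_n(f_i(z))$ and apply the pointwise bound $g_{n+1}(w)\le b_n\,g_n(w)$ at each $w=f_i(z)$ inside the average, then take the supremum over $z$ (the paper writes it with the index shifted, proving $b_n\le b_{n-1}$). Your added remarks on the positivity of $g_n$ and on $f_i$ mapping $(0,1)$ into $(0,1)$ are sound housekeeping that the paper leaves implicit, but they do not change the argument.
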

\begin{proof}
	Since for any fixed $z$
	\begin{align*}
		g_{n+1}(z)&=\frac{1}{\ell}\sum_{i=1}^\ell g_n(f_i(z))\\
		&\le b_{n-1}\left(\frac{1}{\ell}\sum_{i=1}^\ell g_{n-1}(f_i(z))\right)\\
		&= b_{n-1}\cdot g_n(z)
	\end{align*}
	We have $b_n(z)=\frac{g_{n+1}(z)}{g_n(z)}\le b_{n-1}$ for any $z$. Therefore
	$b_n\le b_{n-1}$ and $\{b_n\}_{n\in\mathbb{N}}$ is a decreasing sequence. 
\end{proof}
\noindent From the above lemma we have: 
\begin{align*}
	g_n(z)&\le b_{n-1}g_{n-1}(z)\\
	&\le b_{0}g_{n-1}(z)\\
	&\le \cdots\\
	&\le b^n_{0}g(z)
\end{align*}
Next we use $b_0$ to give an upperbound for the scaling exponent of the kernel.
\begin{lemma}
	For $a,b\in(0,1)$ and $n\in\mathbb{N}$ we have:
	\begin{equation*}
		\frac{1}{n}\log_\ell\text{Pr}(Z_n\in[a,b])\le \log_\ell b_0+
		O\left(\frac{1}{n}\right)
	\end{equation*}
\end{lemma}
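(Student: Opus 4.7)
The plan is to combine a one-line Markov-type lower bound on $g_n(z)$ with the iterated upper bound $g_n(z)\le b_0^{\,n}\,g(z)$ that has already been derived immediately before the lemma. The idea is that if $Z_n$ places non-negligible mass on the compact subinterval $[a,b]\subset(0,1)$, then $g_n(z)$ cannot decay faster than that mass times $\inf_{[a,b]}g$, while on the other hand $g_n(z)$ is controlled from above by $b_0^{\,n}\,g(z)$; dividing these two estimates gives the claim.

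Fix $Z_0=z\in(0,1)$ and assume, as will be arranged when $g$ is selected in the preceding strategic step, that $g:[0,1]\to[0,1]$ is non-negative and strictly positive on $(0,1)$. Then $c_{a,b}\eqdef\inf_{y\in[a,b]}g(y)$ is a strictly positive constant depending only on $a$, $b$, and $g$. Whenever the realization satisfies $Z_n\in[a,b]$ we have $g(Z_n)\ge c_{a,b}$, and so
\begin{equation*}
g_n(z)=\mathbb{E}[g(Z_n)\mid Z_0=z]\;\ge\;c_{a,b}\cdot\text{Pr}\bigl(Z_n\in[a,b]\,\big|\,Z_0=z\bigr).
\end{equation*}
Combining this with $g_n(z)\le b_0^{\,n}\,g(z)$ yields
\begin{equation*}
\text{Pr}\bigl(Z_n\in[a,b]\bigr)\;\le\;\frac{g(z)}{c_{a,b}}\cdot b_0^{\,n}.
\end{equation*}
Taking $\log_\ell$ on both sides and dividing by $n$ produces
\begin{equation*}
\frac{1}{n}\log_\ell\text{Pr}\bigl(Z_n\in[a,b]\bigr)\;\le\;\log_\ell b_0\;+\;\frac{1}{n}\log_\ell\!\frac{g(z)}{c_{a,b}},
\end{equation*}
and the second summand is manifestly $O(1/n)$ because $g(z)/c_{a,b}$ is a fixed finite constant independent of $n$.

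The only mildly delicate point is the positivity of $c_{a,b}$, which is really a property imposed on the test function $g$ at the moment it is chosen (step (1) of the overall $\mu(K_{64})$ strategy) rather than a burden of this lemma; any reasonable choice such as $g(z)=\bigl(z(1-z)\bigr)^\alpha$ for some $\alpha\in(0,1)$ satisfies it automatically on $[a,b]\subset(0,1)$. Everything else is a two-line Markov estimate followed by the telescoping bound on $g_n$ that is already in place, so no genuine obstacle is expected here.
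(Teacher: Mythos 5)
Your proof is correct and is essentially the same argument as the paper's: the paper phrases the first step as Markov's inequality, $\text{Pr}(Z_n\in[a,b])\le \text{Pr}(g(Z_n)\ge \min_{[a,b]}g)\le \mathbb{E}[g(Z_n)]/\min_{[a,b]}g$, which is identical to your lower bound $g_n(z)\ge c_{a,b}\,\text{Pr}(Z_n\in[a,b])$, and both then invoke $g_n(z)\le b_0^n g(z)$ and take logarithms. Your explicit remark on the positivity of $\min_{[a,b]}g$ is a point the paper leaves implicit but is satisfied by its choice of test function.
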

\begin{proof}
	By Markov inequality
	\begin{align*}
		\text{Pr}(Z_n\in[a,b])&\le \text{Pr}(g(Z_n)\ge \min_{z\in[a,b]}g(z))\\
		&\le \frac{\mathbb{E}[g(Z_n)]}{\min_{z\in[a,b]}g(z)}
	\end{align*}
	So                                                                                      
	\begin{align*}                                                                          
	\frac{1}{n}\log_\ell\text{Pr}(Z_n\in[a,b])&\le \frac{1}{n}\log_\ell\frac{(b_0)^{n}g(z)}{\min_{z\in[a,b]}g(z)}\\
	&\le \log_\ell b_0+\frac{1}{n}\left(\log_\ell\frac{g(z)}{\min_{z\in[a,b]}g(z)}\right)
	\end{align*}                                                                            
\end{proof}
\noindent Since by scaling assumption                                                             
\begin{align*}                                                                          
-\frac{1}{\mu}=\lim_{n\rightarrow\infty} \frac{1}{n}\log_\ell(\text{Pr}(Z_n\in[a,b]))   
\end{align*}                                                                            
We have $\mu\le -\frac{1}{\log_\ell b_0}$. We next pick the appropriate test function
$g$ and use $b_0$ to obtain a valid upperbound for $\mu(K_{64})$. First we explain in
detail how we construct this test function $g(z)$.

By the trellis algorithm, we get the explicit polynomials $f_{50},\cdots,f_{64}$.
By the duality theorem, we also get to know the explicit formulas for $f_1,\cdots,f_{15}$. And
there are 34 polarization behavior polynomials left unknown. But for those unknown
coefficients, we can calculate their upperbound as follows:
\begin{lemma}
	Let $A_0,A_1,\cdots,A_\ell$ be the weight enumerators for the coset 
	$(C_{i-1}\backslash C_i)$, then
	\begin{equation*}
		E_{i,w}\leq \min\left(\sum_{j=1}^i\binom{\ell-i}{j-i}A_i,\binom{\ell}{i}\right)
	\end{equation*}
\end{lemma}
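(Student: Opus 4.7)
The plan is to derive the two quantities appearing inside the minimum separately and then combine them. By Theorem~\ref{delta}, the set of uncorrectable weight-$w$ erasure patterns for $W_i$ coincides with the set of weight-$w$ vectors in the cover set $\Delta(\mcC_{i-1}\backslash\mcC_i)$. The trivial bound $E_{i,w}\leq \binom{\ell}{w}$ is then immediate, since $\binom{\ell}{w}$ simply counts all weight-$w$ vectors in $\F_2^\ell$.

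For the second (union-type) bound, I would expand the cover set as a union indexed by the coset:
\begin{equation*}
	\Delta(\mcC_{i-1}\backslash\mcC_i)=\bigcup_{\ccc\in\mcC_{i-1}\backslash\mcC_i}\{\eee\in\F_2^\ell:\supp(\ccc)\subseteq\supp(\eee)\}.
\end{equation*}
Fixing a codeword $\ccc$ of Hamming weight $j$, the weight-$w$ vectors covering $\ccc$ are obtained by choosing the remaining $w-j$ erased coordinates among the $\ell-j$ positions outside $\supp(\ccc)$, giving exactly $\binom{\ell-j}{w-j}$ patterns (and none when $j>w$, under the usual convention). Grouping codewords by weight and applying the union bound then yields
\begin{equation*}
	E_{i,w}\;\leq\;\sum_{j=1}^{w}\binom{\ell-j}{w-j}A_j,
\end{equation*}
where the sum starts at $j=1$ because the all-zero vector belongs to $\mcC_i$ and is therefore excluded from the coset, so $A_0=0$. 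Combining this with the trivial bound by taking the minimum gives exactly the inequality in the statement (after correcting for what appear to be index typos in the displayed formula).

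This is a routine combinatorial argument, and I do not expect a genuine obstacle. The only mild subtleties are bookkeeping: verifying that $A_0=0$ so the summation index can start at $1$, and noting that the union bound can overcount patterns that cover multiple elements of the coset (hence the inequality is generally loose, which is precisely why the sharper trivial bound $\binom{\ell}{w}$ is kept as an alternative whenever the $A_j$'s grow large). For the application to $K_{64}$, this lemma will be used to upper-bound the unknown polarization-behavior polynomials $f_i(z)$ via the weight enumerators $A_j$, which in turn are accessible through the trellis representations of the kernel codes developed earlier.
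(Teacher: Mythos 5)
Your proof is correct and follows essentially the same route as the paper's: identify the uncorrectable weight-$w$ patterns with the weight-$w$ vectors of the cover set via Theorem~\ref{delta}, apply a union bound over coset codewords grouped by weight (with $\binom{\ell-j}{w-j}$ covering patterns per weight-$j$ codeword), and intersect with the trivial bound $\binom{\ell}{w}$. You are also right that the displayed formula contains index typos; the intended bound is $E_{i,w}\leq\min\bigl(\sum_{j=1}^{w}\binom{\ell-j}{w-j}A_j,\binom{\ell}{w}\bigr)$, which is exactly what both your argument and the paper's establish.
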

\begin{proof}
	By theorem \ref{delta} any erasure pattern is uncorrectable iff it covers a codeword
	in $(C_{i-1}\backslash C_i)$. For each codeword in $(C_{i-1}\backslash C_i)$ of 
	weight $j$, there are $\binom{n-j}{w-j}$ erasure patterns with weight $w$ that
	covers it. So $E_{i,w}\le \sum_{j=1}^i\binom{\ell-i}{j-i}A_i$. On the other hand,
	$E_{i,w}$ is at most $\binom{\ell}{i}$.
\end{proof}
\begin{figure}[t]
	\begin{center}
		\includegraphics[width=8cm]{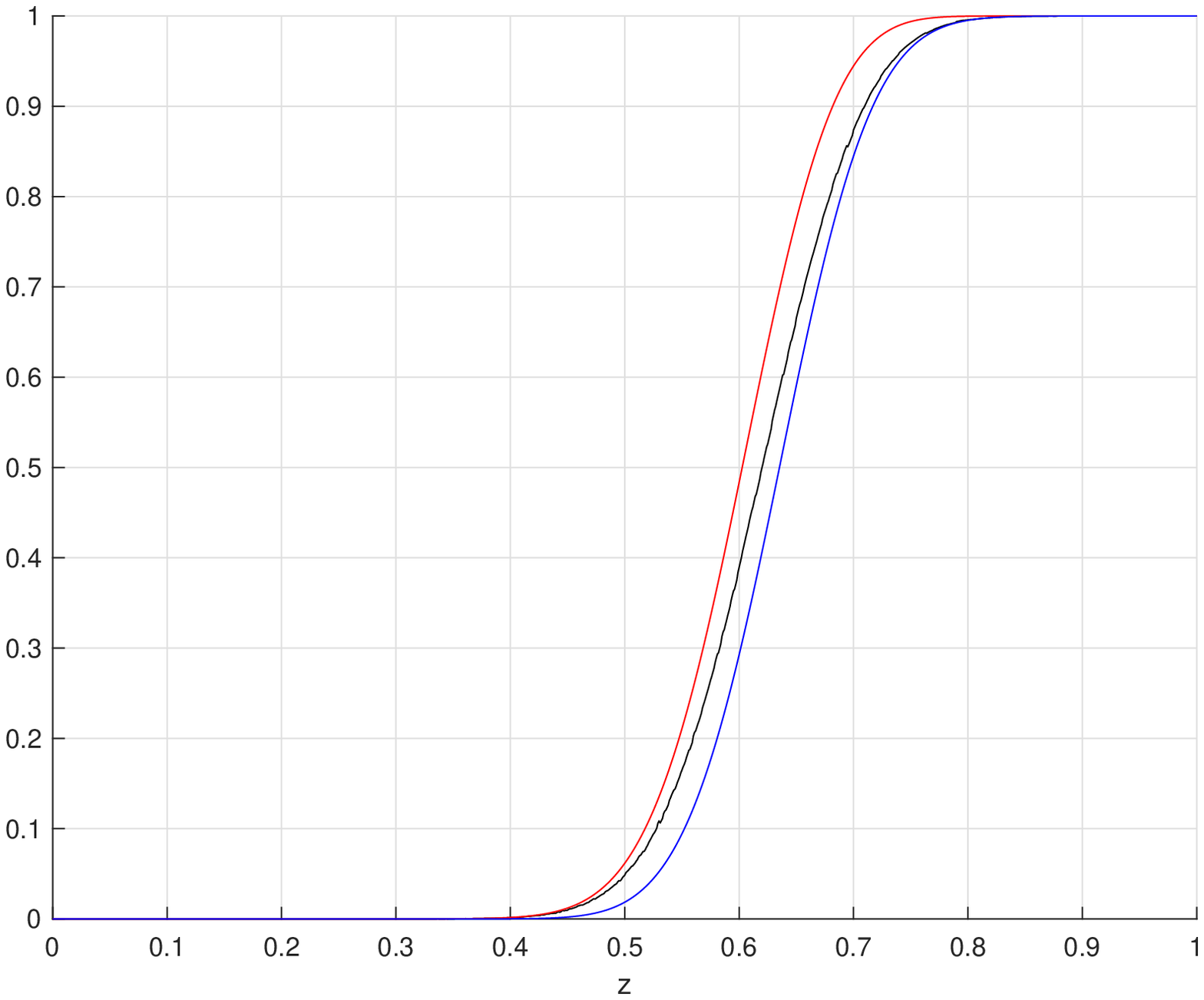}
		\includegraphics[width=8cm]{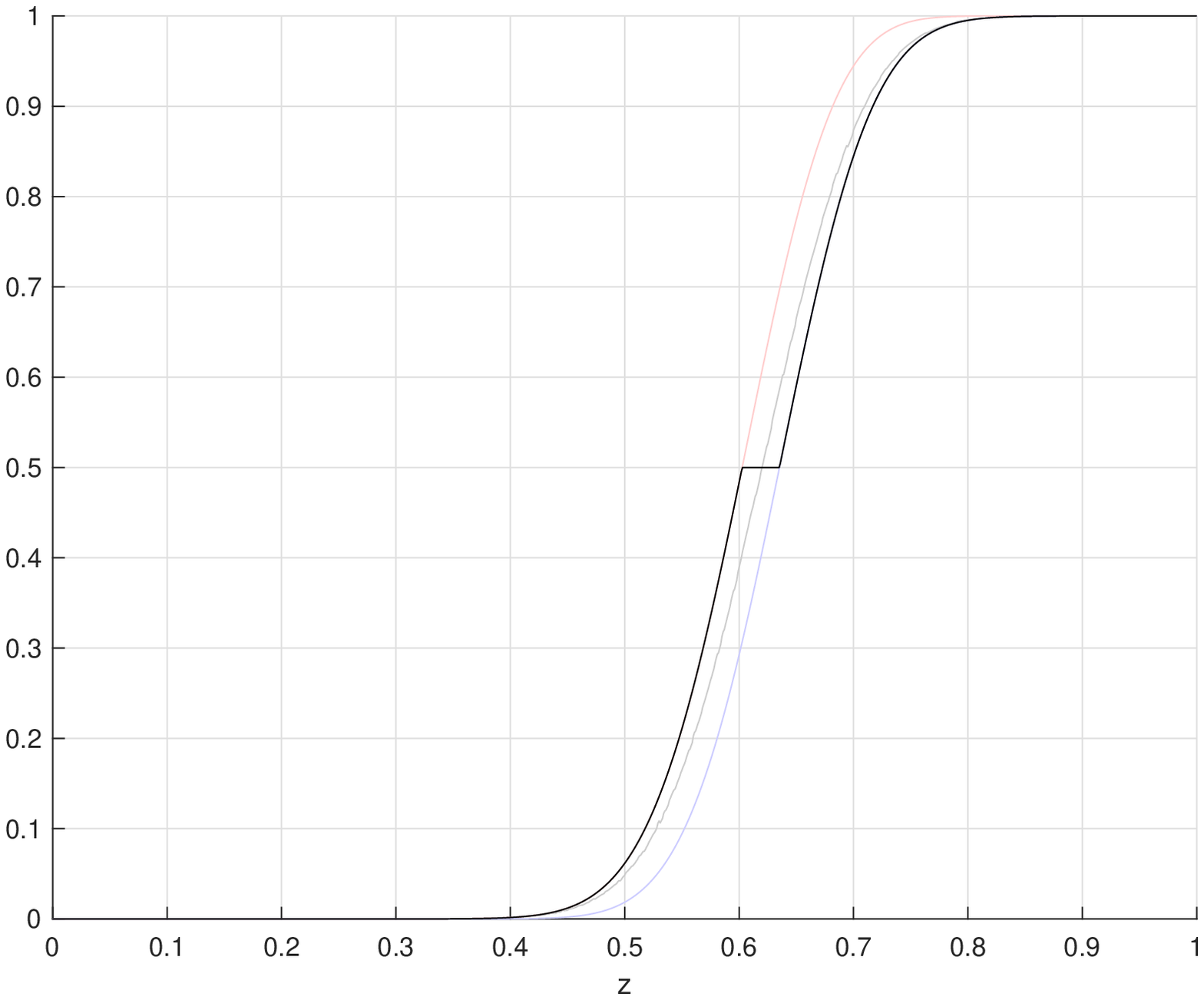}
		\caption{For kernel $K_{64}$, on the left, from left to right we have 
		$\overline{f}_{40}(z)$, $f_{40}(z)$ simulated by Monte Carlo interpolation 
		method and $\underline{f}_{40}(z)$. On the right we have, $\tilde{f}_{40}(z)$.}
		\label{fig:f_40(z)}
	\end{center}
\end{figure}
\noindent For $i=16,17,\cdots,49$ of $K_{64}$. Define
\begin{equation*}
	\overline{E}_{i,w}=\min\left(\sum_{j=1}^i\binom{\ell-i}{j-i}A_i,\binom{\ell}{i}\right),
	\quad\underline{E}_{i,w}=\binom{\ell}{i} - \overline{E}_{\ell+1-i,\ell-w}
\end{equation*}
\noindent And define
\begin{equation*}
    \overline{f}_i(z)=\sum_{i=0}^\ell \overline{E}_{i,w}z^i(1-z)^{\ell-i},\qquad
    \underline{f}_i(z)=\sum_{i=0}^\ell \underline{E}_{i,w}z^i(1-z)^{\ell-i}
\end{equation*}
\noindent Then for $i=16,17,\cdots,49$ and any fixed $z$, 
$f_i(z)\in[\overline{f}_i(z),\underline{f}_i(z)]$. An example for 
$\overline{f}_i(z)$ and $\underline{f}_i(z)$ are shown in Fig~\ref{fig:f_40(z)}.

\begin{figure}[t]
	\begin{center}
		\includegraphics[width=8cm]{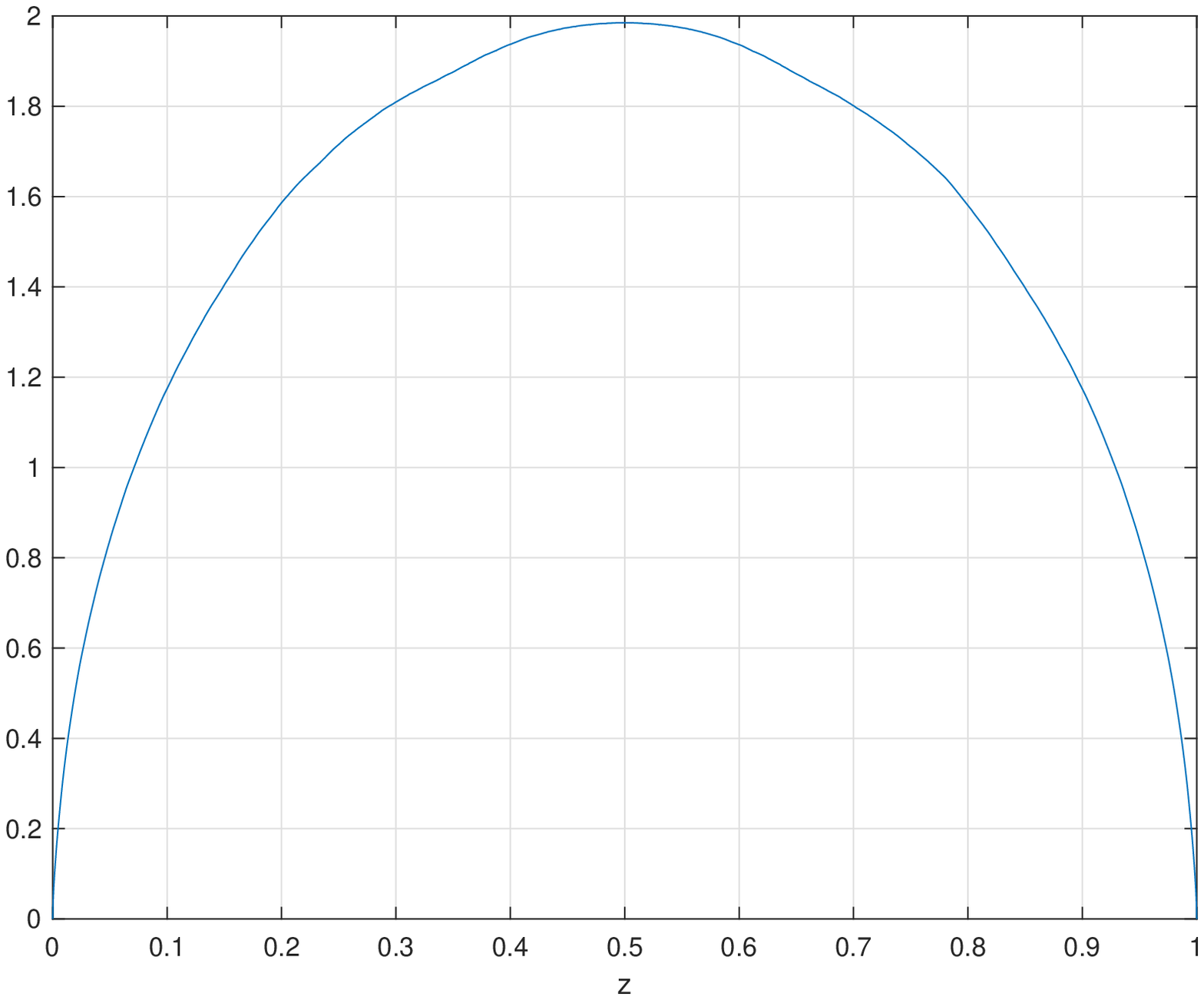}
		\includegraphics[width=8cm]{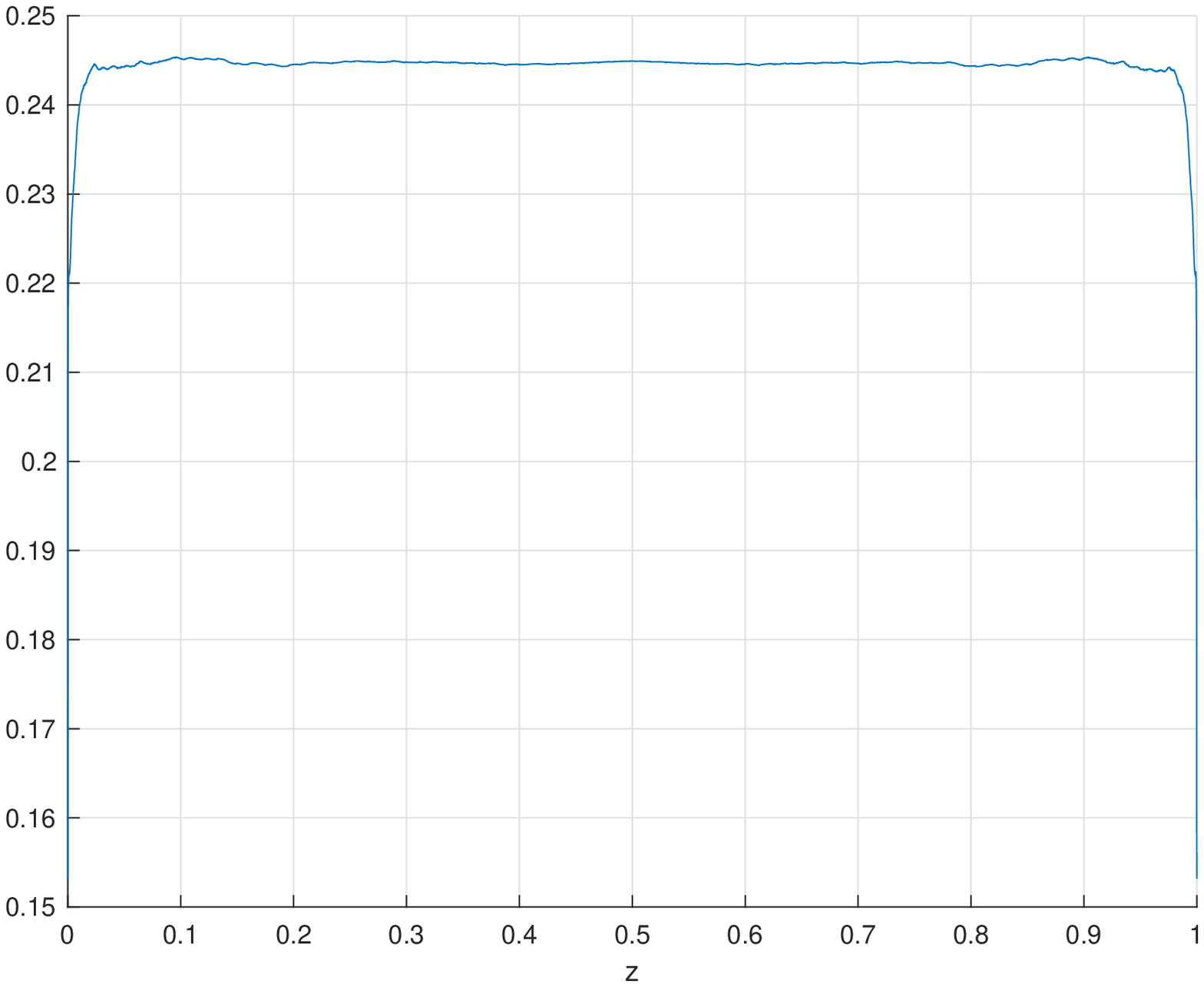}
		\caption{Left: test function $g(z)$. Right: upperbound
		$\overline{b}_0(z)$.}
		\label{fig:g(z)}
	\end{center}
\end{figure}
We define our test function $g$ as follows
\begin{equation*}
	g(z)=\frac{1}{64}\left(\sum_{i=1}^{15}g^\ast(f_i(z))+\sum_{i=50}^{64}g^\ast(f_i(z))+
	\sum_{i=16}^{49} g^\ast(\tilde{f}_i(z))\right),\qquad
	g^\ast(z)=z^{1/2}(1-z)^{1/2}
\end{equation*}
\noindent where
\begin{equation*}
	\tilde{f}_i(z)=\begin{cases}
		\overline{f}_i(z) & \overline{f}_i(z)\le 0.5\\
		0.5 & 0.5\in(\overline{f}_i(z),\underline{f}_i(z))\\
		\underline{f}_i(z) & \underline{f}_i(z)\ge 0.5\\
	\end{cases}
\end{equation*}
An example of $\tilde{f}_i(z)$ is shown in Fig~\ref{fig:f_40(z)}. And a plot of the
test function is shown in Fig~\ref{fig:g(z)}. Since $K_{64}$ is self-dual, by duality 
theorem we can shown that $g(z)$ increases on $[0,0.5]$, decreases on $[0.5,1]$, and
reach its maximum when $z=0.5$. Therefore for $i=16,17,\cdots,49$:
\begin{equation*}
	g(f_i(z))\le \begin{cases}
		g(\overline{f}_i(z)) & \overline{f}_i(z)\le 0.5\\
		g(0.5) & 0.5\in(\overline{f}_i(z),\underline{f}_i(z))\\
		g(\underline{f}_i(z)) & \underline{f}_i(z)\ge 0.5\\
	\end{cases}
\end{equation*}
And this gives us a strict upper bound 
$\overline{g}_1(z)$ for $g_1(z)$:
\begin{align*}
	g_1(z) &= \frac{1}{64}\left(\sum_{i=1}^{64}g(f_i(z))\right)\quad
	\le\quad \overline{g}_1(z)=\frac{1}{64}\left(\sum_{i=1}^{15}g(f_i(z))+\sum_{i=50}^{64}g(f_i(z))+
	\sum_{i=16}^{49} g(\tilde{f}_i(z))\right)
\end{align*}
which provide a strict upper bound $\overline{b}_0(z)$ for $b_0(z)$, as shown in 
Fig~\ref{fig:g(z)}
\begin{equation*}
	b_0(z)=\frac{g_1(z)}{g(z)}\quad\le
	\quad \overline{b}_0(z)=\frac{\overline{g}_1(z)}{g(z)}
\end{equation*}
And the maximum value of $\overline{b}_0(z)$ can be calculated analytically up to any desired
precision. Our calculation shows that:
\begin{equation*}
	b_0=\sup_{z\in(0,1)}b_0(z)\le \max_{z\in(0,1)}\overline{b}_0(z)=0.2454
\end{equation*}
which provides an upperbound $\mu(K_{64})\le -\frac{1}{\log_{64}0.2454}=2.9603$.

\end{appendices}

\newpage

\end{document}